  \theoremstyle{plain}
  \newtheorem{assumption}{\protect\assumptionname}
  \theoremstyle{plain}
  \newtheorem{lemma}{\protect\lemmaname}
  \theoremstyle{plain}
  \newtheorem{proposition}{\protect\propositionname}
  \theoremstyle{plain}
  \newtheorem{corollary}{\protect\corollaryname}
  \theoremstyle{plain}
  \newtheorem{claim}{\protect\claimname}
\newcommand{\tablenote}{\vspace{1mm} \footnotesize \setstretch{1.05}}
  \providecommand{\assumptionname}{Assumption}
  \providecommand{\lemmaname}{Lemma}
  \providecommand{\propositionname}{Proposition}
\providecommand{\corollaryname}{Corollary}
\providecommand{\claimname}{Claim}
\begin{document}
\begin{titlepage}
\title{Relational Communication\thanks{
We thank Luis Zerme{\~n}o for important early contributions to this paper. We also thank anonymous referees, Ricardo Alonso, David Austen-Smith, Simon Board, Wouter Dessein, Sven Feldmann, Yuk-fai Fong, Robert Gibbons, Richard Holden, Johannes H{\"o}rner, Navin Kartik, Jin Li, Niko Matouschek, Tymofiy Mylovanov, Marco Ottaviani, Michael Powell, Heikki Rantakari, Joel Sobel, Kathryn Spier, and participants at various seminars and conferences for helpful comments and suggestions. Kolotilin acknowledges support from the Australian Research Council Discovery Early Career Research Award DE160100964 and from MIT Sloan's Program on Innovation in Markets and Organizations.}}
\author{Anton Kolotilin and Hongyi Li\thanks{%
UNSW Business School, School of Economics. Emails:
\href{mailto:akolotilin@gmail.com}{akolotilin@gmail.com} and \href{mailto:hongyi@hongyi.li}{hongyi@hongyi.li}.}}
\date{10th December 2020}
\clearpage\maketitle
\thispagestyle{empty}

\begin{abstract}
We study a communication game between an informed sender and an uninformed receiver with repeated interactions and voluntary transfers. Transfers motivate the receiver's decision-making and signal the sender's information. Although full separation can always be supported in equilibrium, partial or complete pooling is optimal if the receiver's decision-making is highly responsive to information. In this case, the receiver's decision-making is disciplined by pooling states where she is most tempted to defect.

\bigskip

\noindent \textbf{JEL\ Classification:}\ C73, D82, D83

\noindent \textbf{Keywords:} strategic communication, relational contracts
\end{abstract}

\end{titlepage}

\setcounter{page}{1}
\section{Introduction\label{Intro}}

Decision-makers and informed parties often develop relationships in which communication and decision-making are governed by informal agreements. We study how such interactions can be disciplined using relational contracts: discretionary compensation schemes that are self-enforcing in a repeated game. We characterize communication and decision-making patterns in optimal equilibria.

As an example of such relational communication, consider the relationships between political advocates and politicians. Political advocates seek to influence politicians, often over a broad range of policy decisions; think, for instance, of the Koch Brothers' extensive lobbying activities. They ply politicians with information about the electoral consequences of various policy choices, such as voter attitudes toward clean-energy legislation or gun control. They also make transfers to politicians, in the form of political contributions. Such transfers serve as contingent payments for favorable policy decisions (\citealt{grossman1994protection,Grossman:1996ck}) and credible signals of advocates' information (\citealt{AustenSmith:1995cd} and \citealt{Lohmann:1995jf}).  Because pay-to-play in this context is illegal, agreements between politicians and advocates are largely implicit and sustained within long-standing relationships.

Another example of relational communication is in repeated principal-agent relationships where an agent implements a series of projects and a principal has relevant information about these projects. Think, for instance, of a movie director and a studio, such as Akira Kurosawa and Toho Studios. The director has to decide whether to position each movie project as more mainstream (in which case the movie will be a sure-fire box-office draw) or as more art-house (in which case profitability is uncertain, but potentially large). The studio may provide informed advice or recommendations about potential box-office revenues, but the director retains creative control. The studio may also make payments to the director -- in the form of bonuses, perks, or additional funding -- to reward the director's decision-making, or to bolster the credibility of the studio head's advice.\footnote{Relatedly, \cite{Hermalin:1998tw} and \cite{benabou2003intrinsic} discuss how payments from principal to agent can be used not only to reward performance, but also to credibly signal the principal's private information.}

To study relational communication, we add repeated interactions and voluntary transfers, as in \cite{levin2003relational}, to the \cite{crawford1982strategic} model of strategic communication. In each period, the sender privately observes an independent draw of the state and sends a message to the receiver, who then makes a decision. The players' preferred decisions are increasing in the state, but the magnitude and sign of the difference between preferred decisions may depend on the state. The players can make voluntary transfers to each other at any point in the game.

In relational communication, transfers allow the sender not only to reward the receiver for compliant decision-making, but also to credibly signal his private information. In particular, full separation can be supported in equilibrium, even when the players are impatient. Therefore, the essential incentive constraint is that the receiver is tempted to make decisions that benefit herself but hurt the sender.

We show that a message rule can be supported in equilibrium if and only if it is \emph{monotone}: it induces a monotone partition of the set of states. In any (Pareto) optimal equilibrium, the decision rule simply maximizes, subject to the receiver's incentive constraint, the joint payoff for each message. Therefore, given this decision rule, the optimal message rule solves the monotone persuasion problem: it maximizes the expected joint payoff over all monotone message rules. This is the Bayesian persuasion problem (\citealt{Rayo:2010jr} and \citealt{kamenica2011bayesian}) with an additional constraint that message rules must be monotone. 

We completely characterize the optimal (second-best) equilibrium when the players' payoffs are quadratic. 
Our key insights are about how information should be selectively hidden and revealed to manage decision-making within an optimal relationship. 
Consider \emph{high-conflict} states where conflict of interest is so large that the first-best decision is not self-enforcing. At these states, self-enforcement requires that second-best decision-making be distorted in favor of the receiver. If the sender's and receiver's preferred decisions respond similarly to information, then first-best and second-best decision-making also respond similarly to information, so full separation is optimal. In contrast, if the receiver is highly responsive to information relative to the sender, then second-best decision-making is too responsive to information, so high-conflict states are optimally pooled to moderate second-best decision-making.
In sum, optimal relational communication involves pooling if and only if high-conflict states exist and the receiver is highly responsive.

The result that the sender hides information only when conflict of interest is sufficiently large seems to be a natural pattern of communication in relationships. Advocates often discuss in detail the costs and benefits of potential legislation with politicians, but may hide their private information in cases that are particularly controversial or consequential. 
Similarly, principals provide honest advice and agents comply when their preferences are largely aligned, but principals may hide information when agents are most tempted to dissent or disobey.

The result that pooling of high-conflict states occurs in relationships {only if the receiver is highly responsive} can be illustrated by fleshing out and  comparing our political advocacy and moviemaking examples. Starting with our advocacy example, suppose that the decision is how much to deregulate gun control and that the state is the degree of deregulation that is most popular with voters. Further, suppose that the sender/advocate is an ideologue who puts little weight on voter popularity and always prefers more deregulation than does the receiver/politician, while the politician is electorally-motivated and simply prefers the most popular policy. This corresponds to a receiver who is highly responsive (that is, his preferred policy is much more responsive to the state than that of the sender). Our model then predicts that the advocate will pool states where the conflict of interest is large, that is, will withhold information from the politician when deregulation is most unpopular. 

Turning to our moviemaking example, suppose that the decision is where to position a movie project's type along the arthouse--mainstream spectrum and that the state is the most profitable movie type. Further, suppose that the director/receiver puts little weight on profitability and always prefers a more art-house type than does the studio head/sender, while the studio simply prefers the most profitable type. This corresponds to a receiver who is not highly responsive. Our model then predicts that no pooling takes place: the studio will always fully inform the director, even in states where art-house projects would be very unprofitable and so the conflict of interest is large -- unlike in the political advocacy example.

In our model, pooling does not only occur at high-conflict states, when the receiver is highly responsive. Suppose the players are neither too patient nor too impatient, so that high-conflict and low-conflict states coexist. Then over-pooling occurs: high-conflict states are optimally pooled with some adjacent low-conflict states to further ease self-enforcement at those high-conflict states. In other words, optimal relationships hide information about some states where full separation and first-best decision-making could be supported in equilibrium. 

We also show that relational communication becomes more informative as the discount factor increases. As the players become more patient, second-best decision-making more closely approximates first-best decision-making and thus makes better use of information. Consequently, the sender optimally reveals more information to the receiver.

An implication of our analysis is that in settings where voluntary transfers are available, incomplete information transmission does not imply a failure to motivate communication, but instead is a tool to discipline decision-making. In other words, the Pareto frontier cannot be expanded simply by introducing a technology for credible (monotone) communication.\footnote{This is in contrast with the existing literature on cheap talk and delegation, where the receiver's expected payoff (which is the standard welfare criterion) unambiguously improves if credible communication can be costlessly achieved.} Indeed, we show that adding public information generally worsens the relationship.

\subsection{Related Literature}

In our model, transfers from the sender to the receiver are used to signal information. \cite{austen2000cheap} and \cite{kartik2007note} consider a related (albeit static) setting where the sender burns money to signal information.\footnote{\cite{kartik2007credulity} and \cite{kartik2009strategic} consider related models with lying costs instead of money burning.} Unlike burning money, signaling information with transfers incurs no welfare cost. This leads to a clean characterization of the set of optimal equilibria; in particular, all optimal equilibria in our model produce identical communication outcomes. In the setting with burned money, equilibrium communication outcomes differ along the Pareto frontier because there is a tradeoff between the efficiency of informed decision-making and the costs of burning money.\footnote{The receiver's optimal equilibrium clearly involves full separation; \cite{karamychevoptimal} characterize the sender's optimal equilibrium.} As a byproduct, we establish a general characterization of equilibria in games of cheap talk with burned money: a message rule is implementable if and only if it is monotone.\footnote{Relatedly, \cite{ottaviani2000economics}, \cite{krishna2008contracting}, and \cite{ambrus2017delegation} consider communication games where contractible transfers from the receiver to the sender are used to elicit the sender's information, as in mechanism design. Due to a limited liability constraint, the receiver has to leave information rents to the sender and thus trades off the efficiency of informed decision making with the corresponding information rents, leading to information pooling. In contrast, in our setting, information rents do not arise and information pooling occurs for Bayesian persuasion purposes.}

Our analysis builds on an extensive literature on repeated interactions with transfers. The seminal papers by \cite{Bull:1987db} and \cite{Macleod:1989kd} focus on settings with symmetric information. \cite{levin2003relational} characterizes the optimal relational contract in two important settings with asymmetric information: adverse selection and moral hazard. In these settings, only the decision-maker (agent) has private information. In contrast, our setting involves an informed sender and an uninformed decision-maker (receiver), in the vein of \cite{crawford1982strategic}. In such relational communication, pooling affects the receiver's beliefs and thus directly influences her decision-making. In contrast, the decision-maker (agent) in \cite{levin2003relational} is fully informed, so pooling has no such Bayesian persuasion effect. Instead, in \cite{levin2003relational}'s analysis of relational adverse selection, pooling of agent's types reduces the variability of transfers to satisfy the self-enforcement constraint.

\cite{alonso2007relational} also consider repeated communication. In contrast to us, they disallow transfers and consider a sequence of short-lived senders rather than a single long-lived sender.\footnote{\cite{baker2011relational} consider a model of repeated decision-making with transfers between long-lived players, but assume symmetric information, so communication plays no role.} In their setting, the sender may hide information when the receiver cannot credibly commit to take appropriate informed decisions. Repeated interactions allow the receiver to credibly sustain decisions favourable for the sender, so as to motivate the sender to communicate more information. In contrast, in our setting, credible communication can be costlessly achieved for any decision rule. Repeated interactions allow the receiver to credibly sustain decisions that improve joint surplus. If repeated interactions cannot sustain first-best decision-making, then communication is an additional tool to further improve decision outcome via Bayesian persuasion.\footnote{A model of repeated Bayesian persuasion would reproduce many of the insights from our model of relational communication. There is a literature on dynamic Bayesian persuasion, albeit with persistent information (\citealt{kremer2014implementing}, \citealt{au2015dynamic}, \citealt{ely2015suspense},  \citealt{horner2016selling}, \citealt{ely2016beeps}, \citealt{che2015optimal}, \citealt{ely2017moving}, \citealt{BRV}, \citealt{orlov2016persuading}, \citealt{best2017persuasion}, and \citealt{smolin2018dynamic}).}

In our model, optimal equilibria are supported by carrot-and-stick strategies (\citealt{abreu1986extremal} and \citealt{goldlucke2012infinitely}), in which a deviator is punished as harshly as possible but only for a single period. We show that the receiver is punished by complete pooling of information and the sender is punished by an extreme incentive compatible decision. These punishments also characterize the receiver's and sender's worst equilibria in games of cheap talk with burned money.

Our paper also contributes to the rapidly growing literature on Bayesian persuasion with transferable utility (\citealt{bergemann2007information}, \citealt{esHo2007optimal}, \citealt{li2017discriminatory}, \citealt{bergemann2018design}, and \citealt{dworczak2017mechanism}). Similarly to these papers, we use tools from mechanism design and Bayesian persuasion. Unlike these papers, commitment power in our model is endogenous and thus imperfect. 

\section{Model\label{Model}}
\subsection{Setup}\label{setup}
A sender ($S$) and a receiver ($R$) play an infinitely repeated communication game with perfect monitoring and with voluntary transfers. Time is discrete and the players have a common discount factor $\delta \in [0,1)$. In each period, the same stage game is played. The sender privately observes a state $\theta \in [0,1] $ and sends a message $m\subset[0,1]$ to the receiver, who then makes a decision $d\in \mathbb{R}$. The state $\theta $ is independently drawn each period from a prior distribution $F(\theta)$ with a strictly positive density $f(\theta)$ for all $\theta \in [ 0,1] $. The sender's and receiver's payoffs, $u_{S}\left( d,\theta \right)$ and $u_{R}\left( d,\theta \right)$, satisfy \cite{crawford1982strategic}'s assumptions:
\begin{assumption}\label{CS}\quad
For each player $i\in \{S,R\}$,
\begin{enumerate}
	\item $u_i(d,\theta)$ is twice differentiable in $d$ and $\theta$ for all $d\in \mathbb{R}$ and $\theta \in [0,1]$,
	\item $\frac{\partial^2 u_i}{\partial d^2}(d,\theta)<0$ for all $d\in \mathbb{R}$ and $\theta \in [0,1]$,
	\item $\frac{\partial u_i}{\partial d}(\rho_i(\theta),\theta)=0$ for some function $\rho_i(\theta)$ and for all $\theta\in [0,1]$,
	\item $\frac{\partial^2 u_i}{\partial d \partial \theta}(d,\theta)>0$ for all $d\in \mathbb{R}$ and $\theta \in [0,1]$.
\end{enumerate}	
\end{assumption}
Parts 2 and 3 of Assumption \ref{CS} require that each player's payoff is strictly concave in the decision and each player $i\in \{S,R\}$ has a unique \emph{preferred} decision $\rho_i(\theta)$ for each state $\theta\in [0,1]$. Similarly, there is a unique first-best decision $\rho_{FB}(\theta)$ that maximizes the joint payoff $u(d,\theta)=u_S (d,\theta)+u_R(d,\theta)$. Part 4 is a sorting condition that ensures that $\rho_S(\theta)$, $\rho_R(\theta)$, and $\rho_{FB}(\theta)$ are strictly increasing in $\theta$.

The players can make voluntary (non-contractible) transfers at any point in the game. Specifically, we enrich the stage game with three rounds of transfers: (i) an \textit{ex-ante} transfer $\tau \in \mathbb{R}$ before the sender observes the state, (ii) an \textit{interim} transfer $t \in \mathbb{R}$ after the sender observes the state and sends the message but before the receiver chooses a decision, and (iii) an \textit{ex-post} transfer $T \in \mathbb{R}$ after the decision is chosen.\footnote{Because next period's ex-ante transfer can substitute for this period's ex-post transfer, the set of equilibrium payoffs would not change if ex-post transfers were removed. But the analysis is simpler if ex-post transfers are allowed.} At each round, a positive transfer represents a payment from sender to receiver; conversely, a negative transfer represents a payment from receiver to sender. Thus, the stage game payoff of the sender is $u_{S}\left( d,\theta \right) -\tau-t-T$, and the stage game payoff of the receiver is $u_{R}\left( d,\theta \right) +\tau + t +T$. Since transfers are voluntary, the sender can reject a positive transfer, and the receiver can reject a negative transfer.

The game has perfect monitoring in that all actions (message, decision, and transfers) are immediately publicly observed, but the state is only observed by the sender. That is, the receiver never observes the state or her payoff.\footnote{This assumption is common in the literature on repeated games with incomplete information (\citealt{aumann1995repeated}), and is ubiquitous in models of repeated communication (\citealt{renault2013dynamic}, \citealt{frankel2016discounted}, \citealt{margaria2017dynamic}, and \citealt{lipnowski2017repeated}). In Section \ref{upward}, we briefly discuss the case where the state is publicly observed at the end of each period.} Figure \ref{Fig:timeline} summarizes the timing of each stage game.

	\begin{figure}[t!]
	\centering 
	\includegraphics[scale=0.6]
	{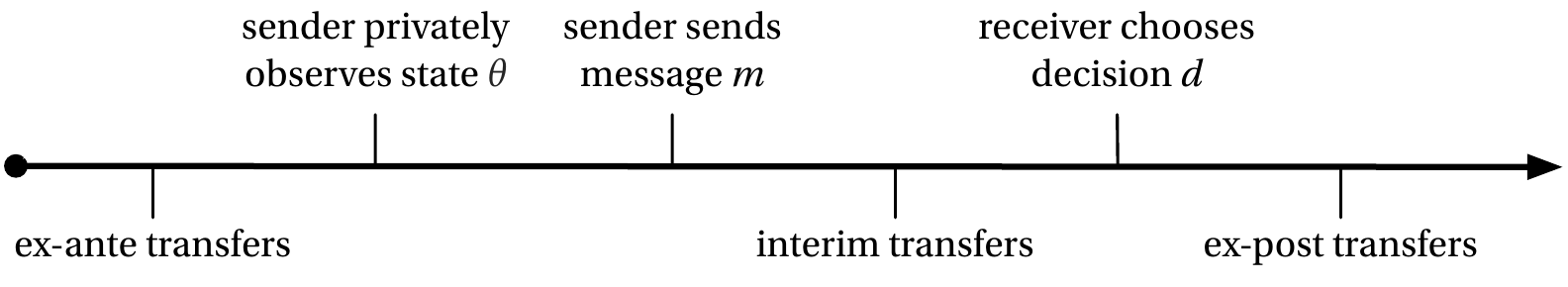}
	\caption{Timing of stage game
	} 
	\label{Fig:timeline}
	\end{figure}

We study pure-strategy perfect Bayesian \textit{equilibria}; later, in Footnote \ref{footnote:mixed}, we comment on mixed-strategy equilibria. For each period and each history, an equilibrium specifies (on-path) a message rule ${\mu}(\theta)$ for the sender, a decision rule $\rho(m)$ for the receiver, and transfer rules $\tau$, $t(m)$, $T(m)$ .\footnote{The functions $\mu$, $\rho$, $\tau$, $t$, and $T$ are required to be measurable.}

\bigskip

\emph{Conventions.} A (pure-strategy) message rule deterministically maps states to the messages they induce. Without loss of generality, we identify each message with the set of states that induce this message, $m=\{\theta:\mu(\theta)=m\}$. Thus, the range $\mu([0,1])$ of a message rule $\mu$ is a partition of the set of states. A message rule $\mu$ is \emph{monotone} if each $m\in \mu([0,1])$ is a convex set (either a singleton or an interval).\footnote{For example, a message rule that separately pools low states $[0,1/3)$ and high states $(2/3,1]$ into two distinct messages while separating intermediate states $[1/3,2/3]$ is monotone; but a message rule that pools low and high states into a single non-convex message $[0,1/3) \cup (2/3,1]$ while separating intermediate states is not monotone.}

We can now extend the definition of payoffs and preferred decisions from being state dependent to being message dependent. Specifically, $u_i(d,m)=\mathbb{E}_F[u_i(d,\theta)|m]$ and $\rho_i(m)=\arg\max_{d\in \mathbb R} u_i(d,m)$ for each player $i\in\{S,R\}$. Similarly, $u(d,m)=u_S (d,m)+u_R(d,m)$ and $\rho_{FB}(m)=\arg\max_{d\in \mathbb{R}}u(d,m)$. Assumption~\ref{CS} ensures that $\rho_S(m)$, $\rho_R(m)$ and $\rho_{FB}(m)$ are well defined and are strictly increasing in $m$ in the strong set order.  

\subsection{Stationarity}\label{sec:stationary}

We focus on stationary equilibria. An equilibrium is \emph{stationary} if on the equilibrium path, the message rule $\mu$, the decision rule $\rho$, and the transfer rules $\tau$, $t$, and $T$ are identical in every period. An equilibrium is \emph{optimal} if it is not Pareto dominated by any other equilibrium. 
An equilibrium is \emph{sequentially optimal} if the continuation equilibrium following any history on the equilibrium path is optimal.

\begin{lemma}\label{lem:stationary}
There exist $\underline{v}_{S}$, $\underline{v}_{R}$, and $\overline{v}$ such that the set of optimal equilibrium payoffs is the line segment
\begin{equation}
\overline V= \left\{ \left( v_{S},v_{R}\right)\in \mathbb R^2\, :\,  v_{S}\geq \underline{v}_{S},v_{R}\geq 
\underline{v}_{R},v_{S}+v_{R} = \overline{v}\right\} .  \label{simplex}	
\end{equation}
Any optimal equilibrium is sequentially optimal.  Further, there exists a stationary optimal equilibrium $\sigma_*$ such that any optimal equilibrium payoff vector $(v_S,v_R)$ can be supported by an equilibrium that differs from $\sigma_*$ only in the first-period ex-ante transfer.
\end{lemma}

Lemma \ref{lem:stationary} extends some of \cite{levin2003relational}'s and \cite{goldlucke2012infinitely}'s results to our setting, with an extensive-form stage game of incomplete information. Because players' payoffs are quasi-linear in money, payoffs are fully transferable, and contingent transfers can substitute for contingent continuation payoffs. Consequently, we can restrict attention to stationary equilibria, and all optimal equilibria induce the message and decision rules that maximize joint payoff $v=v_S + v_R$.

\section{Equilibrium\label{Analysis}}
\subsection{Implementability\label{sec:impl}}

We now show that the presence of interim and ex post voluntary transfers enables separation of the sender's and receiver's incentive constraints. 
The sender's incentive constraint requires that the decision outcome be monotone.
The receiver's incentive constraint requires that induced decisions be close to the receiver's preferred decisions.

Define the receiver's temptation to deviate from decision $d$ given message $m$ as
\begin{equation*}
w(d,m)= u_{R}(\rho_R(m),m )-u_R(d,m),
\end{equation*}
and the \emph{net discounted surplus} given joint payoff $v$ as
\begin{equation*}\label{eq:L}
L(v)= \frac{\delta}{1-\delta}(v-\underline{v}_S-\underline{v}_R).	
\end{equation*}

\begin{proposition}\label{pr:impl}
A message rule $\mu$ and a decision rule $\rho$ that produce a joint payoff $v$ can be supported in a stationary equilibrium if and only if the decision outcome is monotone:
\begin{gather}
\rho(\mu(\theta))\text{ is non-decreasing in }\theta,\label{IC}
\end{gather}
and the receiver's temptation to deviate never exceeds the net discounted surplus:
\begin{gather}
	w(\rho(m),m)\leq L(v)\text{ for all }m\in \mu([0,1]).\label{enforcement}
\end{gather}
\end{proposition}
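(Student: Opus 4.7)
My plan is to prove the two directions separately. Monotonicity comes from standard mechanism-design logic applied to the sender's truthful reporting under quasilinear transfers and single-crossing; enforcement comes from a dynamic-programming argument in which joint continuation surplus dissuades all within-period deviations. Quasi-linearity in money, together with the simplex structure from Lemma \ref{stationary}, lets me treat $L(v)$ as a single freely-divisible pool of slack shared between the sender's and the receiver's incentive constraints, which is what makes the enforcement condition collapse to a single scalar inequality.

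For necessity, suppose a stationary equilibrium supports $(\mu,\rho)$ with joint payoff $v$. I would first collapse each message's interim and on-path ex-post transfers into a single net payment $p(m)$ from sender to receiver. Writing the sender's IC between any two on-path types $\theta_{1}<\theta_{2}$ with messages $m_{1}=\mu(\theta_{1})$ and $m_{2}=\mu(\theta_{2})$ and summing the two inequalities cancels the $p(m_{i})$'s, leaving
\[
\int_{\theta_{1}}^{\theta_{2}}\!\Bigl[\tfrac{\partial u_{S}}{\partial\theta}(\rho(m_{2}),\tilde\theta)-\tfrac{\partial u_{S}}{\partial\theta}(\rho(m_{1}),\tilde\theta)\Bigr]d\tilde\theta\ \ge\ 0,
\]
which by the strict single-crossing in part~4 of Assumption \ref{CS} forces $\rho(m_{1})\le\rho(m_{2})$, giving \eqref{IC}. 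For \eqref{enforcement}, I examine the receiver's one-shot deviation to $\rho_{R}(m)$: her current gain is $w_{R}(\rho(m),m)$ and the harshest credible penalty is a drop from $V_{R}$ to $\underline{v}_{R}$ in continuation, worth $\tfrac{\delta}{1-\delta}(V_{R}-\underline{v}_{R})$. Dually, the largest ex-post transfer the sender can be made to deliver is bounded by $\tfrac{\delta}{1-\delta}(V_{S}-\underline{v}_{S})$. Combining these with the no-burn restriction $T_{S}+T_{R}\ge 0$ yields exactly $w_{R}(\rho(m),m)\le L(v)$.

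For sufficiency, I construct an explicit stationary equilibrium. Using Lemma \ref{stationary}, fix any split $(V_{S},V_{R})\in V$ with $V_{S}+V_{R}=v$ via first-period ex-ante transfers. For each on-path $m$, set ex-post transfers so the receiver is just indifferent, by taking $T_{S}(m)=\max\{0,w_{R}(\rho(m),m)-\tfrac{\delta}{1-\delta}(V_{R}-\underline{v}_{R})\}$ and $T_{R}(m)=-T_{S}(m)$; by \eqref{enforcement} and the freedom in choosing the split, both the receiver's decision IC and the sender's ex-post transfer IC hold. For the sender's reporting, monotonicity of $\rho\circ\mu$ together with a Myerson-style envelope construction pins down interim transfers $t_{S}(m), t_{R}(m)$ (summing to zero on path) under which local IC holds for every on-path type, and monotonicity plus single-crossing upgrade this to global IC. Off-path messages, off-path decisions, and any non-payment of a promised transfer are deterred by reverting to the deviator's worst equilibrium, which exists as a vertex of the simplex in Lemma \ref{stationary}.

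The main obstacle is coordinating the three transfer rounds so that every within-period deviation (misreport, wrong decision, skipped transfer) and every across-period deviation is deterred simultaneously, while respecting the no-burning constraint on path and keeping both continuation values inside $V$. The key observation that unblocks this is that, by the simplex structure in Lemma \ref{stationary}, the continuation surplus $L(v)$ can be freely reallocated between $V_{S}-\underline{v}_{S}$ and $V_{R}-\underline{v}_{R}$ through ex-ante transfers, so the only aggregate constraint that truly binds collapses to $w_{R}(\rho(m),m)\le L(v)$, which is precisely \eqref{enforcement}.
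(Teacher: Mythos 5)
Your overall architecture matches the paper's: necessity of \eqref{IC} via the standard quasilinear/single-crossing characterization of the sender's reporting problem, necessity of \eqref{enforcement} by adding the receiver's decision constraint to the sender's ex-post-transfer constraint and using $T_S+T_R\geq 0$ together with $v_S+v_R=v$, and sufficiency by an explicit construction in which interim transfers implement the envelope payments and continuation values inside the simplex of Lemma \ref{stationary} enforce the decision rule. Your observation that the split $(V_S,V_R)$ is a free parameter so that only the aggregate slack $L(v)$ matters is exactly the mechanism at work in the paper (which happens to pick the corner $V_S=\underline{v}_S$, $V_R=v-\underline{v}_S$ and puts the entire enforcement burden on the receiver's continuation value rather than on ex-post transfers; your interior split with compensating ex-post transfers is an equivalent variant, and your algebra showing that the sender's ex-post constraint then reduces to \eqref{enforcement} is correct).

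There is, however, one genuine gap in your sufficiency argument: the treatment of off-path message--transfer pairs. You dismiss them with ``deterred by reverting to the deviator's worst equilibrium,'' but a deviation to some $(\hat m,\hat t_S)\notin(\mu,t_S)([0,1])$ occurs \emph{mid-period}: the receiver must still form a belief and choose a decision in the current period, and that decision choice must itself be sequentially rational for her, while the sender's current-period gain from provoking it must be outweighed. Continuation punishment alone cannot do this work --- indeed, in the paper's construction the sender's continuation value already sits at $\underline{v}_S$, so there is \emph{no} continuation punishment left for him, and even with your interior split the current-period gain from a favourable off-path decision is unbounded a priori. The paper closes this by normalizing the interim transfer so that $\inf_m t_S(m)=0$ is (approximately) attained at some $m^p$, specifying that any off-path pair induces belief $m^p$ and decision $d^p=\rho(m^p)$, and then checking two things: the receiver is willing to choose $d^p$ because $w_R(d^p,m^p)\leq L(v)$ is just \eqref{enforcement} at $m^p$ (Condition C4(b)), and the sender cannot gain because he could have induced the very same decision $d^p$ on path by reporting $m^p$ at zero interim transfer, which weakly dominates any deviation with $\hat t_S\geq 0$ (Condition C2(b)). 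A limiting argument is needed when the infimum is not attained. Without some such specification of the off-path belief and decision, your construction does not yet verify that $(\mu,\rho)$ is supported in equilibrium; with it, your proof goes through and is essentially the paper's.
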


We first argue that (\ref{IC}) and (\ref{enforcement}) are necessary. In any equilibrium, the message rule $\mu(\theta)$ must be incentive compatible for the sender.  Since the sender's payoff is quasi-linear in money and satisfies a sorting condition, a standard characterization of incentive compatibility in mechanism design (see, for example, \citealt{rochet1987necessary}) implies that $\rho(\mu(\theta))$ must be non-decreasing in $\theta$.

Also, in any equilibrium, the decision rule $\rho$ must be incentive compatible for the receiver. Therefore, given a message $m$, the receiver's one-period payoff gain from choosing her preferred decision $\rho_R(m)$ instead of equilibrium decision $\rho(m)$ must be less than the maximum available punishment equal to the discounted surplus.

We now argue that (\ref{IC}) and (\ref{enforcement}) are sufficient. Ignoring the sender's incentive compatibility constraint, any decision rule $\rho$ that satisfies (\ref{enforcement}) can be made incentive compatible for the receiver by giving all surplus to the receiver ($v_R = v - \underline{v}_S$) and threatening her with her worst equilibrium payoff ($v_R = \underline{v}_R$) following any deviation from $\rho(m)$. 

In such a construction, the sender receives his worst equilibrium payoff $\underline{v}_S$ and thus cannot be punished for deviating. But for any message rule $\mu$ that satisfies (\ref{IC}), we can separately construct a (voluntary) interim transfer rule that makes $\mu$ incentive compatible for the sender.

The envelope theorem (see, for example, \citealt{milgrom2002envelope}) implies that there exists a unique (up to a constant $C$) interim transfer rule $t$ such that the sender prefers to induce $\rho(\mu(\theta))$ and pay $t(\mu(\theta))$ rather than to induce $\rho(\mu(\hat{\theta}))$ and pay $t(\mu(\hat{\theta}))$ for all $\hat\theta \neq \theta$,
\begin{align}
t(m)=u_S(\rho(m),\theta(m))-\int_0^{\theta(m)} \frac{\partial u_S}{\partial \theta} (\rho(\mu(\tilde\theta)),\tilde\theta) d\tilde\theta +C, \label{eq:h}
\end{align}
where $\theta(m)$ is an arbitrary state $\theta\in m$.\footnote{Since $\rho(\mu(\theta))$ is non-decreasing in $\theta$, $t(m)$ is independent of the choice of a representative state $\theta\in m$.} The constant $C$ can be chosen in such a way that the sender does not want to deviate to any out-of-equilibrium message-transfer pair $(\hat{m},\hat{t})$. Specifically, choose $C$ such that the minimum transfer is equal to zero and is achieved for some punishment message $m^P$,\footnote{In the proof, we allow for the possibility that $\inf t(m)$ is not attained by any $m^P$.}
\begin{equation}\label{eq:ts}
t(m) \geq 0\text{ for all }m\in \mu([0,1]),\text{ with equality for some }m^P\in \mu([0,1]).
\end{equation}
If following any out-of-equilibrium pair $(\hat{m},\hat{t})$, the receiver believes that the state is in $m^P$ and chooses the punishment decision $d^P=\rho(m^P)$, then the sender prefers to report $m^P$ and pay $t(m^P)=0$ rather than to report $\hat{m}$ and pay $\hat{t}$. Thus, the sender's incentive constraint is satisfied.\footnote{Note that self enforcement does not impose any limit on the variability of interim transfers and thus does not create a shadow cost to screening, which is a key driving force in \cite{levin2003relational}'s analysis of relational adverse selection.}

This argument implies that voluntary interim transfers are powerful in signaling information, even if the players are myopic.\footnote{Although interim transfers are powerful, messages are still used to convey information. For example, suppose the players' preferred decision rules intersect at some state. Then in any fully separating equilibrium, the interim transfer function is non-monotone and takes the same value for multiple state realizations. Messages are thus used to distinguish between these realizations.}

\begin{corollary}\label{cr:myopic}
Suppose $\delta=0$. A message rule $\mu$ and a decision rule $\rho$ can be supported in an equilibrium if and only if $\mu$ is monotone and $\rho(m)=\rho_R(m)$ for all $m\in \mu([0,1])$.
\end{corollary}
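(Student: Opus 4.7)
The plan is to deduce the corollary from Proposition~\ref{pr:impl} by specializing to $\delta = 0$. First, observe that $L(v) = \tfrac{\delta}{1-\delta}(v - \underline{v}_S - \underline{v}_R) = 0$ identically when $\delta = 0$, so the enforcement constraint $w_R(\rho(m), m) \leq L(v)$ collapses to $w_R(\rho(m), m) \leq 0$. Since by definition $w_R(d, m) = u_R(\rho_R(m), m) - u_R(d, m) \geq 0$, with equality only at $d = \rho_R(m)$ by strict concavity of $u_R(\cdot, m)$ (from Assumption~\ref{CS}), this pins down $\rho(m) = \rho_R(m)$ for every on-path message $m$.

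Given that $\rho = \rho_R \circ \mu$ is so determined, it remains to show that the monotonicity condition of Proposition~\ref{pr:impl} --- namely, that $\rho_R(\mu(\theta))$ is nondecreasing in $\theta$ --- is equivalent to $\mu$ being monotone. Sufficiency is the easier direction: if $\mu$ is monotone, then any two distinct blocks of the partition $\mu([0,1])$ are disjoint intervals, hence comparable in the strong set order, and the strict monotonicity of $\rho_R$ in the strong set order (noted right after Assumption~\ref{CS}) gives the nondecreasing property. Combined with the first paragraph, Proposition~\ref{pr:impl} then yields that $(\mu, \rho_R \circ \mu)$ is supportable in a stationary equilibrium.

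For necessity, suppose $\mu$ is not monotone, so some block $m$ is non-convex: there exist $\theta_1 < \theta_2 < \theta_3$ with $\theta_1, \theta_3 \in m$ and $\theta_2$ in a distinct block $m' \in \mu([0,1])$. Applying the nondecreasing condition on the pairs $(\theta_1, \theta_2)$ and $(\theta_2, \theta_3)$ yields $\rho_R(m) \leq \rho_R(m') \leq \rho_R(m)$, which forces $\rho_R(m) = \rho_R(m')$. The main (and most delicate) step is to rule out this equality: I would invoke the strict sorting condition $\partial^2 u_R/\partial d\,\partial \theta > 0$ from Assumption~\ref{CS}, together with the fact that $m$ and $m'$ interlace around $\theta_2$, to derive a strict inequality $\rho_R(m) \neq \rho_R(m')$ and thus a contradiction, establishing that $\mu$ must be monotone.
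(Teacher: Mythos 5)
Your reduction to Proposition \ref{pr:impl} is the paper's intended route, and the first two steps are correct: at $\delta=0$ we have $L(v)=0$, so the enforcement constraint together with $w_R(d,m)>0$ for all $d\neq\rho_R(m)$ (strict concavity of $u_R(\cdot,m)$) pins down $\rho(m)=\rho_R(m)$; and for a monotone $\mu$ the blocks are disjoint convex sets, hence totally ordered in the strong set order, so $\rho_R(\mu(\theta))$ is nondecreasing and Proposition \ref{pr:impl} delivers sufficiency.

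The gap is in the necessity step, and it is not merely a deferred detail: the strict inequality you propose to extract from the sorting condition is false in general. Part 4 of Assumption \ref{CS} makes $\rho_R(\theta)$ strictly increasing in the \emph{state}, but for a \emph{set} $m$ the decision $\rho_R(m)$ solves $\mathbb{E}[\partial u_R/\partial d(d,\theta)\mid m]=0$, and two interlacing sets can both average this marginal to zero at the same $d$. Concretely, with $u_R(d,\theta)=-(d-\theta)^2/2$ and a uniform prior, the blocks $m=[0,\tfrac14)\cup(\tfrac34,1]$ and $m'=[\tfrac14,\tfrac34]$ interlace yet $\rho_R(m)=\rho_R(m')=\tfrac12$, so $\rho_R(\mu(\theta))$ is constant (hence nondecreasing) and both conditions of Proposition \ref{pr:impl} hold for this non-monotone $\mu$. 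The ``only if'' direction therefore holds only up to outcome equivalence, and the way to close the argument is not to derive $\rho_R(m)\neq\rho_R(m')$ but to merge all messages that induce the same decision, i.e.\ pass to $\tilde\mu(\tilde\theta)=\{\theta:\rho_R(\mu(\theta))=\rho_R(\mu(\tilde\theta))\}$ as in the paper's proof of Proposition \ref{PrOpt}; since $\rho_R(\mu(\theta))$ is nondecreasing, each block of $\tilde\mu$ is convex, so the merged rule is monotone and induces the same decision outcome. Equivalently, the exact necessary and sufficient condition delivered by Proposition \ref{pr:impl} at $\delta=0$ is that $\rho_R(\mu(\theta))$ be nondecreasing in $\theta$ --- precisely the formulation the paper's own footnote identifies as the one that survives without part 4 of Assumption \ref{CS} --- and the corollary's ``$\mu$ monotone'' should be read modulo this merging.
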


Corollary~\ref{cr:myopic} is closely connected to existing results from the literature on cheap talk and burned money (\citealt{austen2000cheap}, \citealt{kartik2007note}, and \citealt{karamychevoptimal}). In the myopic setting, interim transfers serve the same signaling role as burned money. In fact, the set of implementable message and decision rules does not depend on whether the sender transfers money to the receiver or whether the sender burns money.\footnote{\cite{karamychevoptimal}'s Proposition 1 characterizes implementable outcomes with money burning. Our mechanism design approach to characterization provides a much simpler proof of the result and removes the assumptions that the bias $\rho_R-\rho_{FB}$ has constant sign and that the receiver's payoff satisfies a sorting condition. Indeed, if the receiver's payoff did not satisfy part 4 of Assumption \ref{CS}, our Proposition \ref{pr:impl} and its proof would still hold, but Corollary~\ref{cr:myopic} would require that $\rho_R(\mu(\theta))$ be non-decreasing in $\theta$, rather than that $\mu$ be monotone.}

In contrast to burned money, interim transfers are not wasteful: the sender's loss is the receiver's gain. Further, since ex-ante transfers are available, the use of interim transfers does not create a distributional imbalance. Any surplus obtained by the receiver from interim transfers can be redistributed to the sender using ex-ante transfers. Such ex-ante transfers can be supported by the threat of complete pooling of information. Consequently, the sender can commit at no welfare cost to any monotone message rule.

\subsection{Optimality}\label{sec:opt}
An optimal equilibrium solves a monotone persuasion problem: it maximizes the expected joint payoff over monotone message rules, subject to the second-best decision rule. 

Define the second-best decision given message $m$ as 
\begin{equation}
\begin{gathered}\label{dstar}
	\rho_*(m)=\arg \max_d u (d,m)\\
	\text{subject to }w(d,m)\leq L(\overline{v}),
\end{gathered}
\end{equation}
and the joint payoff under the second-best decision as
\begin{equation}\label{eqn:Gamma}
u_*(m) = u(\rho_*(m),m)\text{ for all }m \subset [0,1]. 	
\end{equation}
\begin{proposition}\label{pr:opt}
In a stationary optimal equilibrium, the message rule satisfies
\begin{equation}
\begin{gathered}\label{mustar}
\mu_*\in\arg\max_{\mu} \; \mathbb{E} \left[ u_*(\mu(\theta)) \right]	 \\
\text{subject to } \mu \text{ is monotone,}
\end{gathered}
\end{equation}
and the second-best decision $\rho_*(m)$ is taken for each on-path message $m\in \mu_*([0,1])$.
\end{proposition}

The intuition for Proposition~\ref{pr:opt} is as follows. By Proposition~\ref{pr:impl}, an optimal equilibrium maximizes $v$ jointly over message and decision rules that satisfy the sender's and receiver's incentive constraints, (\ref{IC}) and (\ref{enforcement}). By constraint~(\ref{IC}) and a revelation principle argument, we can restrict attention to monotone message rules. Consider a relaxed problem in which the constraint (\ref{IC}) is replaced with the constraint that the message rule is monotone. It is easy to see that $\rho_*$ given by (\ref{dstar}) and $\mu_*$ given by (\ref{mustar}) solve this relaxed problem. Further, we show that $\rho_*(m)$ is non-decreasing in $m$ because the sender's and receiver's payoffs satisfy the sorting condition (part 4 of Assumption \ref{CS}). Therefore, $\rho_*(\mu_*(\theta))$ is non-decreasing in $\theta$, the constraint (\ref{IC}) is automatically satisfied, and $\rho_*$ and $\mu_*$ constitute an optimal equilibrium.

Proposition~\ref{pr:opt} shows that the decision rule and message rule in any optimal equilibrium can be calculated in two steps. First, the decision rule is characterized without reference to the message rule. The decision rule is point-wise equal to the second-best decision $\rho_*(m)$ given by~(\ref{dstar}). For each message $m$, the second-best decision $\rho_*(m)$ can be found as follows. If $d=\rho_{FB}(m)$ satisfies the constraint of (\ref{dstar}), then $\rho_*(m)=\rho_{FB}(m)$. Otherwise $\rho_*(m)$ is the unique decision $d$ that lies between $\rho_R(m)$ and $\rho_{FB}(m)$ and satisfies the constraint of (\ref{dstar}) with equality.
Second, given $\rho_*$ and thus $u_*$, the message rule $\mu_*$ solves the \emph{monotone persuasion} problem (\ref{mustar}): it maximizes the expected joint payoff $\mathbb{E}[u_*(\mu(\theta))]$ over all monotone message rules $\mu$.\footnote{\label{footnote:mixed}We have restricted attention to pure-strategy equilibria. Proposition~\ref{pr:opt} would continue to hold if we instead restricted attention to equilibria where the receiver uses a pure strategy. Indeed, since $u_S$ is supermodular, in any such equilibrium, the sender's incentive constraint implies that the decision outcome must be monotone, with randomizations only at a countable set of states which -- since $\theta$ has a density -- do not affect expected payoffs. Moreover, Proposition~\ref{pr:opt} would continue to hold without restriction to pure strategies for either player if we additionally assume that $\partial ^2 u_S(d,\theta)/\partial d\partial \theta=\beta'(\theta)\gamma'(d)$ for some increasing functions $\beta$ and $\gamma$ -- as, for example, in Assumption \ref{quad}. Under this additional assumption, in any mixed-strategy equilibrium, the sender's incentive constraint implies that a higher state induces a higher expectation of $\gamma(d)$, with randomizations by the sender only at a countable set of states. Thus, as far as payoffs are concerned, we can restrict attention to equilibria with a pure monotone message rule. Given this restriction and the assumption that payoffs are concave in the decision, the second-best decision rule is pure and so is the receiver's strategy in any optimal equilibrium.}

\subsection{High-Conflict States}\label{sec:conflict}
We say that a state $\theta$ is \emph{high-conflict} if the first-best decision is not enforceable at this state, $w(\rho_{FB}(\theta),\theta)>L(\overline{v})$; otherwise, the state is \emph{low-conflict}.
\begin{corollary}\label{c:separation_necessary}
If full separation is suboptimal, then some states are high-conflict, and each non-singleton message $m \in \mu_*([0,1])$ contains some high-conflict states.
\end{corollary}
\begin{proof}
Consider the second-best decision rule and a monotone message rule where some non-singleton message $m$ consists only of low-conflict states. Then the expected joint payoff can be increased by separating all states in $m$, thus implementing the first-best decision $\rho_{FB}(\theta)$ for each $\theta\in m$; while keeping the other messages unchanged.
\end{proof}

Further, we can specify sufficient local conditions for pooling to be optimal at some high-conflict states.
\begin{proposition}\label{prop:localpool}
Suppose $L(\overline{v})>0$, and $u_S(d,\theta)$ and $u_R(d,\theta)$ are thrice differentiable in $d$ and $\theta$. Full separation is suboptimal if there exists a high-conflict state $\hat\theta\in [0,1]$ such that
\begin{gather}
\frac{1}{2}{u''_{dd}(\rho_*( \hat\theta),\hat\theta)}\left( \rho'_*(\hat\theta)-2 \tilde \rho_{FB}'(\hat\theta)  \right)	+ {u'_d(\rho_*( \hat\theta),\hat\theta)}\left( \frac{\rho_*''(\hat\theta)}{2 \rho_*'(\hat\theta)}  -  \frac{\varphi''(\hat\theta)}{2 \varphi'(\hat \theta)}  \right)<0	
	\label{eqn:rhoprime} \\
\text{where} \quad \tilde\rho_{FB}'(\hat\theta)=-\frac{u''_{d\theta}(\rho_*(\hat\theta),\hat\theta)}{u''_{dd}(\rho_*( \hat\theta),\hat\theta)}\quad \text{and}\quad \varphi(\theta)=u_R(\rho_R(\hat\theta),\theta)-u_R(\rho_*(\hat\theta),\theta). \nonumber
\end{gather}
\end{proposition}

Separating a high-conflict state $\hat\theta$ is suboptimal if for a small neighborhood $m$ around $\hat\theta$, pooling is better than separation, 
$\mathbb E[u(\rho^*(m),\theta)| m]>\mathbb E[u(\rho^*(\theta),\theta)|m]$. Equation \eqref{eqn:rhoprime} shows that the tradeoff between separation and pooling is affected by the slope of $\rho_*$ and the curvatures of $\rho_*$ and $\varphi$. 

Importantly, as it turns out, the second-best decision rule can be approximated as 
\begin{equation}
\rho_*(m)\approx \rho_* \left(\varphi^{-1}\left(\mathbb E[\varphi(\theta) |m]\right)\right).\label{eqn:rho_approx}
\end{equation}
To see why, notice that the receiver's binding incentive constraint $w(\rho_*(m),m)=L(\overline{v})$ implies
\[\rho_*(m)-\rho_*(\hat\theta) \approx \frac{\mathbb E[u_R(\rho_R(\hat\theta),\theta)-u_R(\rho_*(\hat\theta),\theta)| m]}{{u_R}'_d(\rho_*(\hat\theta),\hat\theta)}=\frac{\mathbb{E}[\varphi(\theta) |m]}{{u_R}'_d(\rho_*(\hat\theta),\hat\theta)}.\]
That is, locally, $\rho_*(m)$ depends on $m$ only through $\mathbb E[\varphi(\theta)|m]$, so we can approximately express $\rho_*(m)\approx \zeta (\mathbb E[\varphi(\theta) |m])$ for some function $\zeta$. Expression (\ref{eqn:rho_approx}) then follows from the observation that $\zeta(\cdot)=\rho_*(\varphi^{-1}(\cdot))$, which in turn follows from $\rho_*(\theta)=\zeta(\varphi(\theta))$. 

The curvatures of $\rho^*$ and $\varphi$ jointly determine the expected second-best decision under pooling, $\rho_* (\varphi^{-1}(\mathbb E[\varphi(\theta) |m]))$, and separation, $\mathbb E[  \rho_* (\varphi^{-1}(\varphi(\theta)))|m]$.
Comparing the expected decision under pooling and separation is equivalent to comparing the expected utility of a consumer with a Bernoulli utility function  $\rho_* (\varphi^{-1}(\cdot))$ under two fair lotteries over consumption quantities. Pooling corresponds to a degenerate lottery that yields a certain quantity $\mathbb{E}[\varphi(\theta) |m]$ and separation corresponds to a risky lottery that yields a random quantity $\varphi(\theta)|m$. Consequently, as $\rho_*$ and $\varphi^{-1}$ become more concave, the consumer becomes more risk-averse -- which makes the degenerate lottery more attractive, so the expected decision becomes relatively higher under pooling. Thus, pooling is more favorable if the first-best decision is higher than the second-best decision, $\rho_{FB}(\hat\theta)>\rho_*(\hat \theta)$, or equivalently $u'_d(\rho_*( \hat\theta),\hat\theta)>0$.

The slope of $\rho_*$ does not affect the expected decisions under pooling and separation, but instead determines whether pooling or separation better approximates the slope of the `locally first-best' decision rule.\footnote{\label{f:locfb}The `locally first-best' decision rule $\tilde\rho_{FB}(\theta)$ maximizes $\mathbb E[u(\rho(\theta),\theta)|m]$ amongst all linear rules $\rho(\theta)$ that satisfy $\rho(\hat\theta)=\rho_*(\hat \theta)$.} The decision outcome has slope $\rho_*'(\hat\theta)$ under separation and slope 0 under pooling.  Pooling is thus more favorable if $\tilde \rho_{FB}'(\hat\theta)$ is closer to $0$ than to $\rho_*'(\hat\theta)$, or equivalently $\rho_*'(\hat \theta)>2\tilde\rho_{FB}'(\hat\theta)$. We will develop this intuition more fully in Section \ref{sec:QuadOpt}, where the payoffs are quadratic.

In the myopic case, where $L(\overline{v})=0$ and correspondingly $\varphi(\theta)={u_R}'_d(\rho_R(\hat\theta),\theta)$, our setting reduces to that of the monotone persuasion problem in which the decision and the state are both one-dimensional. Proposition \ref{prop:localpool} can thus be interpreted as specifying local sufficient conditions for optimal persuasion to involve some pooling. Closest to this analysis is \cite{jehiel2015transparency}, who also consider local sufficient conditions for optimal persuasion to involve pooling. His main result is that full separation is suboptimal if there exist two distinct states at which the receiver's preferred decision is the same. This result has no bite in our setting, where the receiver's preferred decision is strictly increasing in the state.

To summarize, Corollary \ref{c:separation_necessary} provides necessary conditions for pooling to be optimal, whereas Proposition \ref{prop:localpool} provides sufficient conditions. The next section specializes to the case of quadratic payoffs, so as to provide conditions that are both necessary and sufficient for pooling to be optimal and to fully characterize optimal relational communication.

\section{Quadratic Payoffs}\label{sec:quadratic}

\begin{assumption}\label{quad}
The sender's and receiver's payoffs are given by $u_S(d,\theta)=\lambda_S(\rho_S(\theta) d-d^2/2)$ and $u_R(d,\theta)=\lambda_R(\rho_R(\theta) d -d^2/2)$ for all $d\in \mathbb{R}$ and $\theta\in [0,1]$ where preferred decisions $\rho_S(\theta)$ and $\rho_R(\theta)$ are increasing and linear in $\theta$, and weights $\lambda_S>0$ and $\lambda_R>0$ satisfy $\lambda_S+\lambda_R=1$.
\end{assumption}
Assumption \ref{quad} implies Assumption \ref{CS}, so all our previous results apply.  Given the normalization $\lambda_S+\lambda_R=1$, the first-best decision is $\rho_{FB}(\theta)= \lambda_R \rho_R(\theta) + \lambda_S \rho_S(\theta)$ and the joint payoff is $u(d,\theta)=\rho_{FB}(\theta)d-d^2/2$.\footnote{These payoff functions nest two special cases. First, in \cite{crawford1982strategic}'s example, the sender has constant upward bias, so that $\rho_S(\theta)=\rho_R(\theta)+b$ with $b>0$. Second, in \cite{kamenica2011bayesian}'s lobbying example, the sender is biased toward a specific decision $d_c\in \mathbb{R}$, so that $\rho_S(\theta)=a  \rho_R(\theta) + (1-a) d_c$ with $a\in(0,1)$.}

Assumption \ref{quad} ensures that $u_*(m)$ depends on a message only through the induced posterior mean state: $u_*(m)=u_*(\mathbb{E}[\theta|m])$ for all $m \subset [0,1]$.\footnote{Bayesian persuasion in settings where payoffs depend only on the induced posterior mean state has been studied, for example, by \cite{gentzkow2016rothschild}, \cite{kolotilin2017persuasion}, \cite{kolotilin2014informed}, and \cite{dworczak2017simple}.} Therefore, without loss of generality, we identify each message $m$ with the induced posterior mean state, $m=\mathbb{E}[\theta|m]$. This simplifies the previous convention that identified each message with the set of states that induce it.

\subsection{Pooling versus Separation}\label{sec:QuadOpt}

We first show that it is optimal to pool some states whenever the players are not patient enough to enforce the first-best outcome and the slope of the receiver's preferred decision rule is sufficiently high.

Under Assumption \ref{quad}, the second-best decision (\ref{dstar}) given message $m$ pushes $d$ as close to $\rho_{FB}(m)$ as possible, while still keeping $d$ within distance $\ell$ from the receiver's preferred decision $\rho_R(m)$:

\begin{equation*}
\begin{gathered}\label{dast}
	\rho_*(m)=\arg \max_d u (d,m)\\
	\text{subject to }\left\vert d -\rho_R(m)\right\vert \leq \ell =\sqrt{\frac{L(\overline v)}{\lambda_R}},
\end{gathered}
\end{equation*}
where we call $\ell$ the relational \emph{leeway}. The second-best decision rule $\rho_*$ is parallel to $\rho_R$ at high-conflict states, where $|\rho_{FB}(\theta)-\rho_R(\theta)|>\ell$, and coincides with $\rho_{FB}$ at low-conflict states.

The tradeoff between pooling and separation is tightly linked to the curvature of the joint payoff under the second-best decision
\[u_*(\theta)=\rho_{FB}(\theta)\rho_*(\theta)-\frac{\rho_*^2(\theta)}{2},\]
which is in turn determined by how the second-best decision responds to the state.
Define the receiver to be \emph{highly responsive} if 

\begin{equation}\label{eqn:responsive}
    \rho'_R(\theta)>2\rho'_{FB}(\theta). 
\end{equation}

We can strengthen Corollary \ref{c:separation_necessary} and Proposition \ref{prop:localpool} as follows.

\begin{corollary}\label{c:separation}
Full separation is suboptimal if and only if some states are high-conflict and the receiver is highly responsive. In that case, $u_*$ is concave for high-conflict states and convex for low-conflict states.
\end{corollary}
\begin{proof}
As shown in \cite{kamenica2011bayesian}, if $u_*$ is convex, then full separation is optimal. Since each message rule $\mu$ is less informative than the fully separating message rule, the prior distribution $F$ is a mean-preserving spread of the distribution of posterior means induced by $\mu$. Thus, if $u_*$ is convex, then \[\mathbb{E} [u_*(\theta)] \geq \mathbb{E}[ u_*(\mu(\theta))],\] showing that full separation is optimal. 

Conversely, if $u_*$ is strictly concave on some non-empty interval $(\theta_1,\theta_2)$, then full separation is suboptimal. Indeed, if $u_*$ is strictly concave on $(\theta_1,\theta_2)$, then \[\mathbb{E} [u_*(\theta)|\theta\in (\theta_1,\theta_2)]<u_*(\mathbb{E}[\theta)|\theta\in (\theta_1,\theta_2)]),\] showing that full separation can be improved upon by pooling states $(\theta_1,\theta_2)$.

To complete the proof of the corollary, it suffices to show that $u_*$ is continuously differentiable and that $u_*''(\theta)<0$ if and only if the state $\theta$ is high-conflict and the receiver is highly responsive. This is established in Lemma \ref{l:d2u} in Appendix \ref{AppQuad}, with
\begin{align*}\label{d2u}
u_*''(\theta) &=(2\rho'_{FB}(\theta) -\rho'_*(\theta))\rho'_*(\theta) \\
&=
\begin{cases}
\rho'^2_{FB}(\theta), &\text{if $\theta$ is low-conflict,}\\
\left(2\rho'_{FB}(\theta)-\rho'_R(\theta)\right)\rho'_R(\theta),  &\text{if $\theta$ is high-conflict.}
\end{cases}\qedhere
\end{align*}
\end{proof}

Corollary \ref{c:separation} shows that at high-conflict states, condition (\ref{eqn:rhoprime}) simplifies to (\ref{eqn:responsive}), and further becomes necessary and sufficient for pooling to be optimal.\footnote{The curvatures of $\varphi$ and $\rho_*$ play no role in the separation-pooling tradeoff because  $\varphi$ is linear and $\rho_*$ is piecewise linear.} We pause to explain in detail how the receiver's responsiveness drives the separation-pooling tradeoff.
Suppose that all states are high-conflict and the receiver is downward-biased, so that the second-best decision rule is $\rho_*(m) = \rho_R(m)+\ell$. The optimal message rule should induce a decision outcome that approximates the first-best decision outcome as closely as possible.  

To understand how the message rule shapes decision-making, first notice that any message rule must induce receiver's beliefs that are correct in expectation, so that every message rule always induces the same expected decision outcome. Thus, the expected decision outcome cannot be moved closer to the expected first-best decision by changing the message rule. Optimizing the message rule thus involves making the `slope' of the induced decision outcome as close as possible to that of the first-best decision outcome.

Let us compare complete pooling and full separation. Complete pooling induces a completely `flat' decision outcome that is completely unresponsive to the state. Full separation induces a decision outcome that runs parallel to the receiver's preferred decision, so $\rho'_*(\theta)=\rho'_R(\theta)$. This decision outcome is clearly more responsive to the state than complete pooling, and further is more responsive than the first-best outcome if $\rho'_R(\theta)>\rho'_{FB}(\theta)$. Thus the first-best slope is better approximated by full separation than by complete pooling if and only if $\rho'_{FB}(\theta)$ is closer to $\rho'_R(\theta)$ than to $0$, or equivalently $\rho'_R(\theta) < 2 \rho'_{FB}(\theta)$; that is, the receiver is not highly responsive.

This logic extends to the more general case where not all states are high-conflict. It is optimal to fully separate states if the receiver is not highly responsive, and to pool the high-conflict states if the receiver is highly responsive.  We will next show how the high-conflict states should be optimally pooled with low-conflict states, when the receiver is highly responsive.

\subsection{Optimal Communication}\label{upward}
We characterize optimal equilibria in the case where the receiver is highly responsive and is downward-biased, $\rho_R(\theta)<\rho_{FB}(\theta)$ for all $\theta\in [0,1]$. Here, the set of high-conflict states is a (possibly empty) interval $[0,\theta_{\mathrm{c}})$ with $\theta_{\mathrm{c}} \in [0,1]$. At the end of this section, we will briefly discuss the case where the receiver is upward-biased for some states and downward-biased for others.

\begin{proposition}\label{pr:sb}
Suppose the receiver is highly responsive and downward-biased. There exists a cutoff state $\theta_*\in [0,1]$ such that the optimal message rule pools the states below $\theta_* $ into one message and separates the states above $\theta_* $. Furthermore, the pooling interval includes all high-conflict states and some adjacent low-conflict states, $\theta_* > \theta_{\mathrm{c}}$, if high-conflict and low-conflict states coexist, $\theta_{\mathrm{c}} \in (0,1)$. Otherwise, $\theta_{\mathrm{c}}\in  \{0,1\}$, the pooling interval is equal to the set of high-conflict states, $\theta_* = \theta_{\mathrm{c}}$.
\end{proposition}

\begin{proof}
By Corollary \ref{c:separation}, if $\theta_{\mathrm{c}}=0$, then $u_*$ is convex and full separation is optimal. Similarly, if $\theta_{\mathrm{c}}=1$, then $u_*$ is concave and complete pooling is optimal. Moreover, if $\theta_{\mathrm{c}}\in (0,1)$, then $u_*$ is concave on $[0,\theta_{\mathrm{c}}]$ and convex on $[\theta_{\mathrm{c}},1]$. Then Proposition \ref{pr:sb} follows from Remark 1 in \cite{kolotilin2019censorship}.\footnote{Related results also appear in \cite{kolotilin2017persuasion}, \cite{kolotilin2014informed}, and \cite{dworczak2017simple}.} For completeness, we outline the proof here. Consider an auxiliary payoff function \[
u_A(\theta)
=
\begin{cases}
u_*(m_*)+{u}_*'(m_*)(\theta- m_*), &\text{if $\theta<\theta_*$},\\
{u}_*(\theta), &\text{if $\theta\geq \theta_*$},
\end{cases}
\]
where $m_*=\mathbb E[\theta|\theta<\theta_*]$ and $\theta_*\geq \theta_{\mathrm{c}}$ is such that ${u}_A(\theta)$ is continuous at $\theta_*$ (Figure \ref{Fig:Aux}). Notice that $u_A$ is convex and $u_A\geq u_*$. Let us solve the auxiliary problem of choosing a message rule $\mu$ to maximize $\mathbb{E}[u_A(\mu(\theta))]$. Since  $u_A$ is convex, full separation is optimal in the auxiliary problem, and the maximum expected auxiliary payoff is
\begin{align*}
\overline v_A &= F(\theta_*) \mathbb E[u_A(\theta)|\theta<\theta_*] + (1-F(\theta_*)) \mathbb E[u_A(\theta)|\theta\geq\theta_*]\\
&= F(\theta_*) u_*(m_*) + (1-F(\theta_*)) \mathbb E[u_*(\theta)|\theta\geq \theta_*].
\end{align*}
Since $u_A \geq u_*$, $\overline v_A$ is an upper bound on the maximum expected joint payoff $\overline v$ in problem (\ref{mustar}). But the message rule that pools the states below $\theta_*$ and separates the rest achieves this upper bound. Since this message rule is monotone, it solves the monotone persuasion problem (\ref{mustar}).
\end{proof}
\begin{figure}[t!]
	\centering 
	\includegraphics[scale=0.6]
	{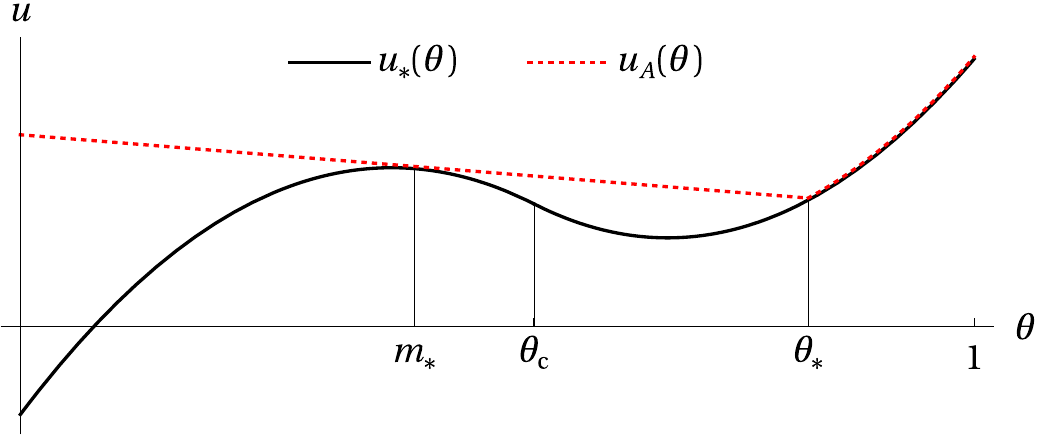}
	\caption{Joint and auxiliary payoff functions} 
	\label{Fig:Aux}
\end{figure}
Proposition \ref{pr:sb} highlights what we call \emph{over-pooling}: all high-conflict states are optimally pooled with some adjacent low-conflict states (Figure \ref{Fig:UpBias}). To see why over-pooling occurs, consider the effect of marginally expanding the pooling interval from $[0, \theta_*)$ to $[0,{\theta}_*+d{\theta})$. The cost of this expansion is that marginal states switch from the first-best decision to the second-best decision, for a loss of $\left(u_*(\theta_*)-u_*(m_*)\right) f(\theta_*) d \theta$. The benefit of this expansion is that decision-making for the pooled states marginally improves towards the first-best decision (Figure \ref{sub:Up2}), for a gain of $u'_*(m_*)(\theta_* - m_*) f(\theta_*) d\theta$. The cutoff state, if interior, is such that the benefit equals the cost:
\begin{equation}
u'_*(m_*)(\theta_* - m_*) = (u_*(\theta_*)-u_*(m_*)).	\label{eqn:cutoff}
\end{equation}
If the receiver is highly responsive, then $u_*$ is concave on $[0,\theta_{\mathrm{c}})$ (Figure \ref{Fig:Aux}). In this case, evaluation of (\ref{eqn:cutoff}) at $\theta_*=\theta_{\mathrm{c}}$ indicates that the benefit of the marginal expansion outweighs the cost, leading to over-pooling: the cutoff state $\theta_*$ is greater than $\theta_{\mathrm{c}}$. Intuitively, with a highly responsive receiver, the second-best decision rule -- which runs parallel to the receiver's preferred decision at high-conflict states -- is also highly responsive to the message. So, expanding the pool results in a large shift in the pooled decision towards the first-best, and thus a large benefit (relative to the cost) from over-pooling.

\begin{figure}[t!] 
	\centering 
	\includegraphics[scale=0.75]{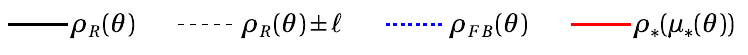}\\ \vspace{-1em}
	\subfloat[High $\delta$: Full Separation]
	{\label{sub:Up1}
	\includegraphics[scale=0.57]
	{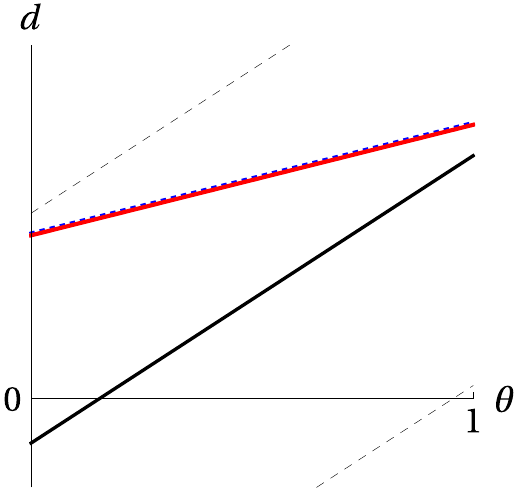}}
	\quad 
	\subfloat[Medium $\delta$: Partial pooling]
	{\label{sub:Up2}
	\includegraphics[scale=0.57]
	{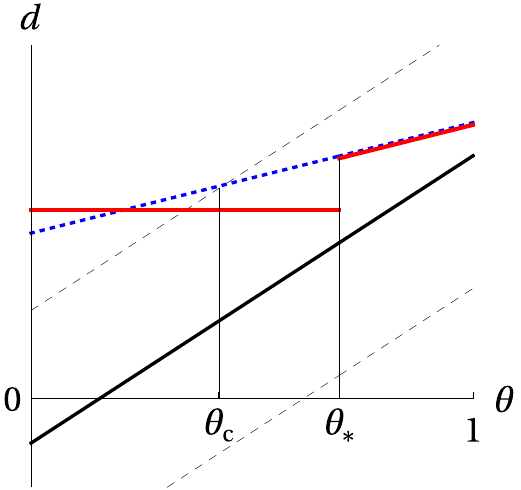}}
	\quad 
	\subfloat[Low $\delta$: Complete Pooling]
	{\label{sub:Up3}
	\includegraphics[scale=0.57]
	{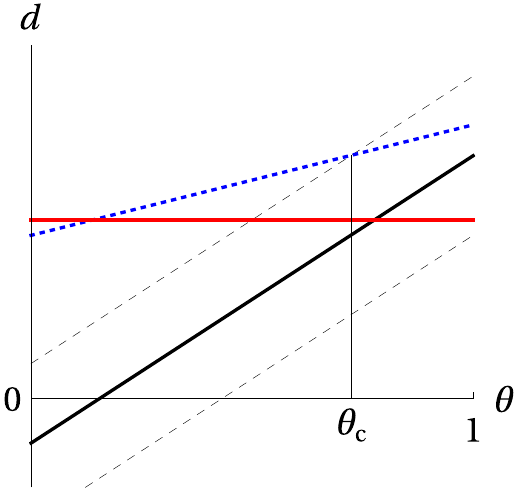}}	\\
    \medskip
	\begin{minipage}{.94\textwidth}
	{\tablenote  Optimal decision rules and decision outcomes given a highly-responsive, downward-biased receiver. Figure \ref{sub:Up1} shows the high-$\delta$ case, with full separation and $\theta_{\mathrm{c}}=\theta_*=0$; Figure \ref{sub:Up2} shows the intermediate-$\delta$ case, with partial pooling and $0 < \theta_{\mathrm{c}} < \theta_* < 1$; Figure \ref{sub:Up3} shows the low-$\delta$ case, with complete pooling and $0 < \theta_{\mathrm{c}} < \theta_* = 1$. The last two cases illustrate overpooling: $\theta_{\mathrm{c}} < \theta_*$.  \par}
	\end{minipage}	
	\caption{Optimal pooling with a highly-responsive, downward-biased receiver} 
	\label{Fig:UpBias}
\end{figure}

As the players become more patient, the interval of high-conflict states $[0, \theta_{\mathrm{c}})$ shrinks,\footnote{This is because the relational leeway $\ell$ increases with $\delta$, as shown in Lemma \ref{elldelta} in Appendix \ref{AppQuad}.} and the pooling interval shrinks with it.
\begin{corollary}\label{c:cs}
Suppose the receiver is highly responsive and downward-biased. As the players become more patient, the pooling interval shrinks, $d\theta_*/d\delta \leq 0$; strictly so if and only if $\theta_* \in (0,1)$.
\end{corollary}

Corollary \ref{c:cs} states that more information is optimally revealed in the sense of \cite{blackwell1953equivalent} as the players become more patient. With more patient players, the second-best decision rule more closely approximates the first-best decision rule and thus makes better use of information. Consequently, the sender optimally reveals more information.\footnote{Relatedly, as Corollary~\ref{c:csbeta} in Appendix~\ref{AppQuad} shows, more information is optimally revealed as the highly responsive receiver becomes less responsive to the state (that is, as $\rho'_R(\theta)/\rho'_{FB}(\theta)$ decreases).}

As an aside, we briefly discuss how optimal relational communication depends on the monitoring structure. We have assumed that the decision is publicly observed, but the state is only observed by the sender. Alternatively, we might assume that the state is publicly observed at the end of each period. In this case, non-monotone message rules could be enforced by conditioning continuation play on the state. However, by the proof of Proposition~\ref{pr:sb}, the second-best unconstrained message rule turns out to be monotone, so Proposition \ref{pr:sb} and Corollary \ref{c:cs} would continue to hold. Yet alternatively, we might assume that the receiver never observes the state and the sender never observes the decision.\footnote{\cite{kuvalekar2019goodwill} study a repeated communication game with such a monitoring structure but without transfers. \cite{horner2015truthful} consider a much richer setting with communication and general monitoring structures, but focus on the case of sufficiently patient players.} Then the receiver would always take her preferred decision; thus complete pooling would be optimal if the receiver is highly responsive and full separation would be optimal otherwise.

The results of this section extend to the case where the receiver is highly responsive and is upward-biased for some states and downward-biased for others.\footnote{This case is technically challenging because, unlike the case where the receiver is downward-biased, the optimal unconstrained message rule is not necessarily monotone, so existing results from the literature on Bayesian persuasion no longer apply. A working-paper version of this paper (\citealt{kolotilin2019relationalwp}) derives new results on monotone persuasion (Section 4) and applies them to characterize optimal equilibria in this case (Section 5.3).  The working paper also derives a closed-form solution for the case where the state is uniformly distributed (Section 5.4).} Both high and low states may be high-conflict, with intermediate states being low-conflict. In this case, optimal communication takes one of three forms (Figure \ref{Fig:SwitchBias}). In all forms, overpooling occurs: high-conflict states are pooled with adjacent low-conflict states. Some intermediate interval of low-conflict states may be separated. As the players become more impatient, the high-conflict-state intervals expand, as do the corresponding pooling intervals. Eventually, the interval of separated states vanishes, so that the state space is partitioned into two pooling intervals. The two-pooling-interval structure remains optimal as the players become yet more impatient -- until, at some sufficiently small discount factor, the two pooling intervals suddenly coalesce into one, and complete pooling becomes optimal.

	\begin{figure}[t!]
	\centering 
		\includegraphics[scale=0.75]{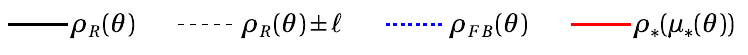}\\ \vspace{-1em}
	\subfloat[Separation in the middle]
	{\label{sub:Switch1}
	\includegraphics[scale=0.57]
	{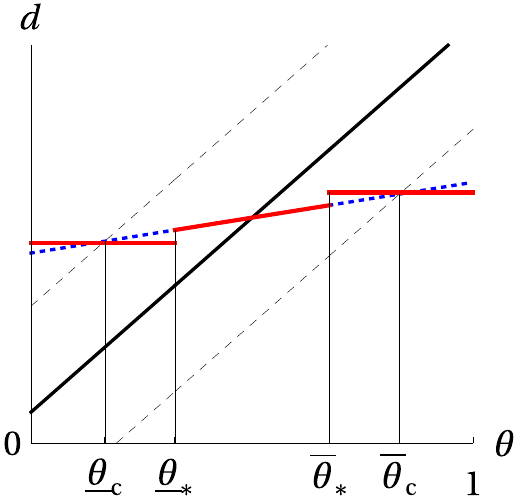}}
	\quad 
	\subfloat[Two intervals of pooling]
	{\label{sub:Switch2}
	\includegraphics[scale=0.57]
	{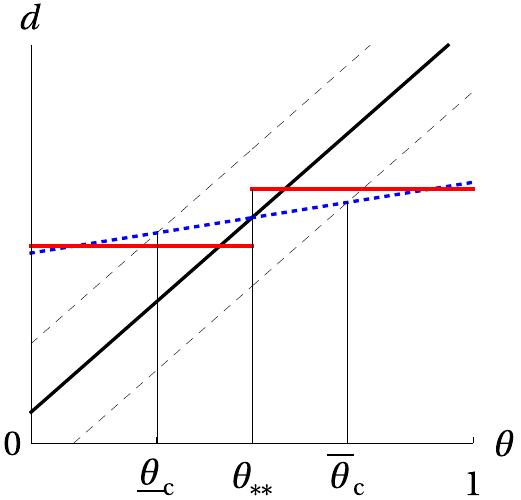}} 
	\quad 
	\subfloat[Complete pooling]
	{\label{sub:Switch3}
	\includegraphics[scale=0.57]
	{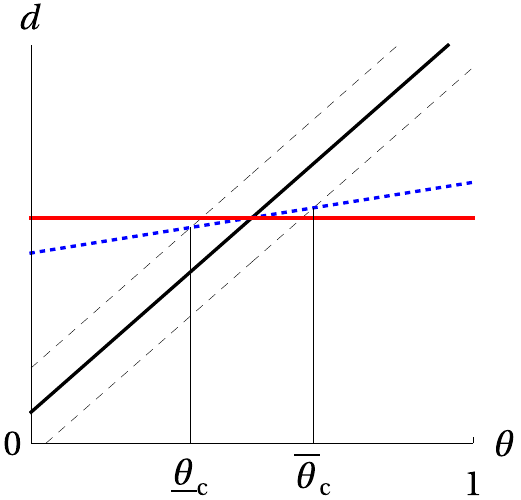}}\\
    \medskip
	\begin{minipage}{.95\textwidth}
	{\tablenote Optimal decision rules and decision outcomes given a highly-responsive receiver and two intervals of high-conflict states, $[0,\underline{\theta}_{\rm{x}})$ and $(\overline{\theta}_{\rm{x}},1]$. Figure \ref{sub:Switch1} shows the case of relatively high $\delta$, with two pooling intervals $[0,\underline{\theta}_{*})$ and $(\overline{\theta}_{*},1]$ and separation of the middle states $\theta \in (\underline{\theta}_{*},\overline{\theta}_{*})$. Figure \ref{sub:Switch2} shows the intermediate-$\delta$ case, with two adjacent pooling intervals $[0,\theta_{**})$ and $[\theta_{**},1]$. Figure \ref{sub:Switch3} shows the low-$\delta$ case with complete pooling. Overpooling occurs in all cases. \par}
	\end{minipage}	
	\caption{Optimal pooling with a highly-responsive receiver who has changing-sign bias
	} 
	\label{Fig:SwitchBias}
\end{figure}

\section{Punishment}\label{s:pun}

In this section, we characterize each player's worst equilibrium: the communication and decision-making outcomes that are used, as part of an optimal equilibrium, to punish the player as harshly as possible following a deviation. Unsurprisingly, the receiver is punished by complete pooling of information, so that she takes an uninformed decision.

We show in the general setting of Assumption \ref{CS} that the sender is punished by the highest or lowest incentive compatible decision. The sender's worst message rule may combine pooling and separation, which hurt the sender in distinct ways. Pooling misadapts decisions to the state, while separation extracts signaling transfers from the sender.

 The worst equilibrium payoffs $\underline{v}_R$ and $\underline{v}_S$ can be supported by single-period punishment strategy profiles in which a deviator is punished as harshly as possible but only for a single period.\footnote{\cite{baker1994subjective,Baker:2002ei} restrict attention to trigger-strategy equilibria where off-path punishments correspond to some static equilibria of the stage game; such trigger strategies are suboptimal in our setting because static equilibria are not the harshest possible punishments. Alternatively, \cite{levin2003relational} specifies exogenous outside option payoffs. Our results continue to hold in these alternative approaches, with the worst equilibrium payoffs replaced by either static equilibrium payoffs or outside option payoffs.}
  A \emph{single-period punishment} strategy profile is described by: normal as well as penal ex-ante transfers, $\underline{\tau}_0$ as well as  $\underline{\tau}_R$ and $\underline{\tau}_S$; normal as well as penal message rules, $\underline{\mu}_0$ as well as $\underline{\mu}_R$ and $\underline{\mu}_S$; normal as well as penal decision rules, $ \underline{\rho}_0$ as well as  $\underline{\rho}_R$ and $\underline{\rho}_S$.
 
 Play proceeds as follows. The ex-ante transfer is $\underline{\tau}_i$ if player $i\in \{{0}, R, S\}$ deviated last in the previous period, where $i=0$ denotes that no player deviated. The message rule, interim transfer rule, decision rule, and punishment message are $\underline{\mu}_j$, $\underline{t} _j$, $\underline{\rho}_j $, and $\underline{m}_j ^P$ if player $j\in \{0, R, S\}$ deviated from the ex-ante transfer in this period, where $\underline{t} _j$ and $\underline{m}_j ^P$ are defined by (\ref{eq:h}) and (\ref{eq:ts}) given $\underline{\mu}_j $ and $\underline{\rho}_j $. The punishment decision is $\underline{d}_j^P= \underline{\rho}_j (\underline{m}_j^P)$ if the sender deviated to some $(\hat{m},\hat{ t})\notin (\underline{\mu}_j, \underline{t}_j)([0,1])$ in this period. Ex-post transfers are always zero.

\begin{proposition}\label{prop:spp}
There exists an optimal equilibrium in single-period punishment strategies where:
\begin{enumerate}
\item The on-path rules are $\underline{\mu}_0=\mu_*$ and $\underline{\rho}_0=\rho_*$,\; so $\overline{v}= \mathbb E [u_*(\mu_*(\theta))]$;
\item The receiver's penal rules are $\underline{\mu}_R=[0,1]$ and $\underline{\rho}_R=\rho_R$,\; so $\underline{v}_{R} =u_R(\rho_R([0,1]),[0,1])$;
\item The sender's penal rules $\underline{\mu}_S $ and $\underline{\rho}_S$ solve
\begin{equation}\label{eq:worst}
\begin{gathered}
\underline{v}_S=\min_{ \mu,\rho,\theta^P} \left\{ u_S(\rho(m^P),\theta^P) + \mathbb E \left[ \int_{\theta^P}^\theta\frac{\partial u_S}{\partial \theta}({\rho}(\mu(\tilde\theta)),\tilde\theta)d\tilde \theta \right]\right\}\\
\text{subject to }\rho(\mu(\theta))\text{ is non-decreasing in }\theta,\\
\rho(m)
\begin{cases}
=\rho_-(m), &\text{if }m>m^P,\\
\in \{\rho_-(m),\rho_+(m)\}, &\text{if }m=m^P,\\ 
=\rho_+(m), &\text{if }m<m^P,
\end{cases}
\end{gathered}
\end{equation}
where $m_P=\mu(\theta_P)$ and $[\rho_-(m),\rho_+(m)]=\{d:w(d,m)\leq L(\overline v)\}$ for all $m \in \mu([0,1])$.
\end{enumerate}
\end{proposition}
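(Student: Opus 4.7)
The plan is to construct a single-period punishment strategy profile with the three regimes described and verify it is an equilibrium achieving $\overline v$, $\underline v_R$, and $\underline v_S$ simultaneously. I rely throughout on Lemma~\ref{stationary} (simplex structure of $V$, stationarity without loss) and on Proposition~\ref{pr:impl} (implementability via monotonicity and enforcement). The interim transfer rules $\underline{\mathrm{t}}_j$ and punishment messages $\underline m_j^p$ for each regime $j\in\{0,S,R\}$ are pinned down by the revenue-equivalence formula~(\ref{eq:h}) together with the normalization~(\ref{eq:ts}) applied to the corresponding $(\underline\mu_j,\underline\rho_j)$. The normal and penal ex-ante transfers $\underline\tu_0,\underline\tu_S,\underline\tu_R$ are then chosen to deliver the appropriate split of continuation payoffs inside $V$: $(\underline v_S,\overline v-\underline v_S)$ after a sender deviation, $(\overline v-\underline v_R,\underline v_R)$ after a receiver deviation, and any preferred split of $\overline v$ on the equilibrium path.

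For part 1, Proposition~\ref{PrOpt} implies that $(\mu_*,\rho_*)$ maximizes $\mathbb{E}[u(\rho(\mu(\theta)),\theta)]$ over implementable profiles, so the normal regime achieves $\overline v=\mathbb{E}[u_*(\mu_*(\theta))]$. For part 2, I first verify that complete pooling with $\underline\mu_R=[0,1]$, $\underline\rho_R=\rho_R$, and zero transfers is itself an equilibrium: monotonicity~(\ref{IC}) is vacuous (a single message), enforcement~(\ref{enforcement}) holds with slack since $\rho_R$ is the receiver's bliss point, and the sender's IC is trivial because all states induce the same message. The receiver's per-period payoff is exactly $u_R(\rho_R([0,1]),[0,1])$. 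For the matching lower bound, in any equilibrium the receiver can deviate stagewise by setting her gross transfers to zero in every round and choosing her myopic best response $\rho_R(m)$ after observing $m$; her resulting stage payoff (augmented by any non-negative net transfer received from the sender) is at least $\mathbb{E}[u_R(\rho_R(\mu(\theta)),\mu(\theta))]\geq u_R(\rho_R([0,1]),[0,1])$ by a standard Blackwell/concavity argument. Stationarity and one-shot deviation then yield $\underline v_R\geq u_R(\rho_R([0,1]),[0,1])$.

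For part 3, fix any implementable $(\mu,\rho)$. Revenue equivalence pins $\mathrm{t}$ down up to a constant via~(\ref{eq:h}), and the normalization~(\ref{eq:ts}) --- $\mathrm{t}(m^p)=0$ for some $m^p=\mu(\theta^p)$ --- is required so that no out-of-equilibrium deviation $(\hat m,\hat{\mathrm{t}})$ is profitable for the sender under the off-path belief that $\theta\in m^p$. Substituting this $\mathrm{t}$ into the sender's expected stage payoff and applying the envelope representation yields exactly the objective of~(\ref{eq:worst}). In any stationary equilibrium with $v_S=\underline v_S$, the one-shot-deviation constraint at the ex-ante stage forces $\underline\tu_S=0$, so $\underline v_S$ equals the minimum of this objective over implementable $(\mu,\rho,\theta^p)$. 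To derive the extremal structure of $\rho$ asserted in~(\ref{eq:worst}), rewrite the expected envelope term via Fubini as $\int_0^1 w(\tilde\theta)\,\frac{\partial u_S}{\partial\theta}(\rho(\mu(\tilde\theta)),\tilde\theta)\,d\tilde\theta$, with weight $1-F(\tilde\theta)\geq 0$ for $\tilde\theta>\theta^p$ and $-F(\tilde\theta)\leq 0$ for $\tilde\theta<\theta^p$; since $\partial u_S/\partial\theta$ is strictly increasing in $d$ by part~4 of Assumption~\ref{CS}, the pointwise minimum in $d\in[\rho_-(m),\rho_+(m)]$ is $\rho_-(m)$ above $\theta^p$ and $\rho_+(m)$ below, with an analogous endpoint choice at $m^p$ coming from the concavity of $u_S(\cdot,\theta^p)$.

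The main obstacle is reconciling this pointwise extremal structure with the sender's monotonicity requirement that $\rho(\mu(\theta))$ be nondecreasing in $\theta$: the jump from $\rho_+$ just below $m^p$ down to $\rho_-$ just above $m^p$ need not respect monotonicity, so showing that the restricted minimum in~(\ref{eq:worst}) equals the true worst sender payoff requires arguing that any non-extremal minimizer can be transformed into an extremal one --- if necessary by coarsening $\mu$ near $\theta^p$ --- without raising the objective. Once~(\ref{eq:worst}) is justified, verifying that the overall single-period punishment profile constitutes an equilibrium reduces to routine one-shot-deviation checks in each regime, using Proposition~\ref{pr:impl}'s sufficiency part together with the off-path belief that unexpected messages are interpreted as $m^p$ and followed by the punishment decision $d^p=\underline\rho_j(\underline m_j^p)$.
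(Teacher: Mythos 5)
Your architecture matches the paper's: part 1 via Proposition~\ref{PrOpt}; part 2 via the recursion $\underline v_R\geq(1-\delta)\mathbb{E}[u_R(\rho_R(\mu(\theta)),\mu(\theta))]+\delta\underline v_R$ plus the value-of-information inequality (this is exactly the paper's use of Condition C3), together with the complete-pooling construction for the matching upper bound; and part 3 via the envelope representation of $V_S$ and a Fubini rewriting whose weights $1-F$ above $\theta^p$ and $-F$ below $\theta^p$ correctly deliver the pointwise extremal choices $\rho_-$ and $\rho_+$. Parts 1 and 2 are fine.

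Part 3, however, has two genuine gaps. First, you never justify the reduction to punishment decisions of the form $d^p=\rho(m^p)$ with $m^p$ an on-path message and the sender's constraint $V_S(\theta)\geq(1-\delta)u_S(d^p,\theta)+\delta\underline v_S$ binding at some $\theta^p\in m^p$. In the lower-bound (necessity) direction, an adversarial equilibrium could a priori use an off-path belief $m^p\subset[0,1]$ and a punishment decision $d^p$ that is \emph{not} in the range of $\rho\circ\mu$ (either in a gap of that range or below/above it), which changes the admissible normalization constant in the transfer rule and hence the value of $\underline v_S$. The paper's Claim~\ref{cl:relax} rules this out by a case analysis using concavity of $u_S$ in $d$ and the sorting condition; your proposal treats the normalization~(\ref{eq:ts}) as given rather than as something to be shown optimal among all feasible $(d^p,m^p)$. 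Second, you explicitly name the monotonicity--extremality reconciliation at $m^p$ as ``the main obstacle'' but do not resolve it; this is precisely the content of the paper's second claim. The missing argument is: take an optimal solution whose $m^p$ is maximal in the set order; if $\rho(m^p)\notin\{\rho_-(m^p),\rho_+(m^p)\}$, then states adjacent to $m^p$ on each side must themselves be pooled (otherwise merging them into $m^p$ strictly lowers the objective, contradicting maximality of $m^p$); and since the objective, viewed as a function of $\rho(m^p)$ alone, reduces to $\mathbb{E}[u_S(\rho(m^p),\cdot)]$ on a censored state and is therefore concave, it is minimized at one of the extreme admissible values $\rho(\mu(\overline\theta^p+))$, $\rho(\mu(\underline\theta^p-))$, $\rho_-(m^p)$, $\rho_+(m^p)$, yielding the contradiction. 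Without these two steps the characterization~(\ref{eq:worst}) of $\underline v_S$ is asserted rather than proved; the remaining verification (Conditions C1--C7 for the single-period punishment profile) is indeed routine, as you say.
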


Proposition \ref{prop:spp} specifies optimal punishments for the receiver and the sender: a deviator is punished as harshly as possible for a single period, and then optimal play resumes. 
The deviator's worst equilibrium payoff equals his or her payoff in the punishment period.
Following a deviation from $\tau$ by the receiver, the message rule is completely uninformative and no transfers are made. Following a deviation from $\tau$ by the sender, the receiver makes either the highest or lowest incentive compatible decision, and the message and interim transfer rules are chosen to minimize the sender's expected payoff. 

\medskip
\textit{The Uniform-Quadratic Example.}
Suppose that the players' payoffs are quadratic and the receiver is downward-biased, as in Section \ref{sec:quadratic}. Suppose further that the state is uniformly distributed. As shown in Proposition \ref{prop:worst} in Appendix \ref{a:pun}, the (upward-biased) sender's penal decision rule is the lowest incentive compatible decision rule, $\underline{\rho}_S(m)=\rho_R(m)-\ell$ for all $m$. Moreover, the sender's penal message rule either pools some interval of low states $[0,\underline\theta_S)$ or separates all states. As the players become more patient, signaling transfers become relatively more effective as a punishment; thus the optimal penal pooling interval shrinks, and full separation eventually becomes optimal.

\section{Public Information}\label{Sep}

In this section, we show that increasing transparency by adding public information generally worsens the relationship. The availability of transfers as a signaling device implies that better informed decision-making can always be achieved without tightening incentive constraints, so transparency adds no informational benefits for the relationship, but tightens incentive constraints in two ways. First, it improves both players' worst possible equilibrium payoffs, and thus limits the severity of off-path punishments. Second, it prevents information pooling, and thus limits the ability to discipline decision-making in high-conflict states.

We augment our model so that at the start of each period, the receiver observes a realization of a state-dependent signal. We consider the general payoffs of Assumption~\ref{CS}, rather than the quadratic payoffs of Assumption~\ref{quad}.
Just as with message rules, we assume that the signal rule $\psi(\theta)$ is deterministic and (without loss) identify each signal realization $s$ with the set of states that induce it, $s=\{\theta:\psi(\theta)=s\} $. 
We also assume that the signal rule $\psi$ is monotone in the sense that each $s\in\psi([0,1])$ is a convex set. 

Since the signal and message rules are deterministic, we can restrict attention to message rules that are \emph{refinements} of the signal rule in that for each realization $s$ of $\psi$ there exists a realization $m$ of $\mu$ such that $m\subset s$. In particular, this restriction allows us to consider decision rules $\rho$ that depend on the message $m$ but not the signal realization $s$, because $m$ incorporates all information contained in $s$.

The set of optimal equilibrium payoffs under signal $\psi$ can be computed by applying Proposition~\ref{prop:spp} separately to each realization of signal $\psi$. In particular, the optimal and penal message and decision rules are defined for each signal realization $s\in \psi ([0,1]) $ as follows: $\rho_*$ is given by (\ref{dstar}) and $\mu_*$ solves (\ref{mustar}) given that the set of states is $s$ rather than $[0,1]$;  $\underline \rho_R=\rho_R$ and $\underline \mu _R=\psi$; and $\underline\rho_S$ and $\underline \mu_S$ solve (\ref{eq:worst}) given that the set of states is $s$. 

We say that $\psi$ is \emph{more informative} than $\hat{\psi}$ if $\psi$ is a refinement of $\hat{\psi}$.
  For monotone signal rules, this notion coincides with the informativeness criterion of \cite{blackwell1953equivalent}. 

\begin{proposition}\label{prop:L_info}
Suppose that signal $\psi$ is more informative than signal $\hat{\psi}$. If the sender's penal decision outcome $\underline{\rho}_S(\underline{\mu}_S (\theta))$ is non-decreasing in $\theta$ under $\psi$, then the best joint payoff is lower under $\psi$ than under $\hat\psi$.%
\end{proposition}

To build intuition for Proposition \ref{prop:L_info}, we start with the myopic case. Consider moving from a fully informative signal ($\psi_f(\theta)=\theta$) to a completely uninformative signal ($\psi_u(\theta)=[0,1]$). We will argue that the worst equilibrium payoffs $\underline{v}_S$ and $\underline{v}_R$ strictly decrease and the best joint payoff $\overline{v}$ weakly increases.

The receiver's worst equilibrium payoff $\underline{v}_R$ is lower under $\psi_u$ than $\psi_f$. In the receiver's worst equilibrium, the receiver always chooses her preferred decision $\rho_R(\psi(\theta))$ given the signal $\psi$ and always receives zero transfers. Public information improves the receiver's decision-making and thus her worst equilibrium payoff.

The sender's worst equilibrium payoff $\underline{v}_S$ is lower under $\psi_u$ than $\psi_f$. The basic idea is that any equilibrium decision outcome implemented under $\psi_f$ (and thus a fully informed receiver) 
can also be implemented under $\psi_u$ by inducing the sender to fully reveal the state to the receiver. The sender's payoff $\underline{v}_S$ is strictly smaller under $\psi_u$ because inducing full separation requires the sender to make positive interim transfers to the receiver.

The best joint payoff $\overline{v}$ is weakly higher under $\psi_u$ than $\psi_f$, again because any equilibrium under $\psi_f$ can be implemented under $\psi_u$. In fact, the best joint payoff may be strictly higher under $\psi_u$ than $\psi_f$. For example, under $\psi_u$, the joint payoff is maximized under complete pooling of the states if the payoffs are quadratic and the receiver is highly responsive (see Section \ref{sec:QuadOpt}). Such pooling, however, is precluded under $\psi_f$ (and thus a fully informed receiver).

In the non-myopic case, these effects are preserved, and are further amplified by the shadow of the future. Moving from $\psi_f$ to $\psi_u$ increases the net discounted surplus $L(\overline v)$ (which increases with $\overline{v}$ and decreases with $\underline{v}_S$ and $\underline{v}_R$). This in turn relaxes constraints on decision-making and increases the best joint payoff.%

A subtle issue arises in the non-myopic case. An implementable decision outcome $\rho(\mu(\theta))$ under an informative signal $\psi$ must be non-decreasing in $\theta$ on each signal realization, but it is not required to be non-decreasing across signal realizations. Thus, there are non-monotone decision outcomes implementable under $\psi$ but not under a less informative signal $\hat\psi$. Nevertheless, to show that the best joint payoff is higher under $\hat \psi$ than under $\psi$, it is sufficient for the second-best and penal decision outcomes under $\psi$ to be monotone, and thus implementable under $\hat \psi$. By Proposition~\ref{prop:spp}, the second-best and receiver's penal decision outcomes are always monotone, but the sender's penal decision outcome can be non-monotone. Hence, in Proposition \ref{prop:L_info}, we require $\underline{\rho}_S(\underline{\mu}_S (\theta))$ to be non-decreasing in $\theta$ under $\psi$. This requirement is satisfied, for example, in the uniform-quadratic example of Section \ref{s:pun}, where $\underline{\rho}_S$ is the lowest incentive compatible decision rule.\footnote{Moreover, this requirement holds if $\delta=0$, or if $\psi(\theta)=\theta$ for all $\theta\in [0,1]$ and $\rho_R(\theta)$ does not cross $\rho_{FB}(\theta)$ from above. In general, however, $\underline{\rho}_S(\underline{\mu}_S(\theta))$ can be non-monotone. For example, suppose that the players' payoffs are quadratic, and $\rho_S(\theta)$ crosses $\rho_R(\theta)$ from below at $\theta_0 \in (0,1)$. Then $\underline \rho_S(\underline\mu_S(\theta))$ is not monotone under fully informative $\psi_f$. Indeed, for states $\theta<\theta_0$, the sender is downward-biased, so $\underline \rho_S(\underline \mu_S(\theta))=\rho_R(\theta)+\ell$. But, for states $\theta>\theta_0$, the sender is upward-biased, so $\underline \rho_S(\underline \mu_S(\theta))=\rho_R(\theta)-\ell$. Thus, $\underline \rho_S(\underline \mu_S(\theta))$ jumps down at $\theta_0$.}

The result that public information hurts the relationship relates to various papers that study the social value of public information. \cite{hirshleifer1971private} argues that welfare may be decreasing in the amount of public information available to agents. \cite{bergemann2015bayes} clarifies this point: making more information available to an agent may, by increasing the set of incentive constraints she faces, shrink the set of equilibrium outcomes.\footnote{\cite{cremer1995arm}, \cite{kloosterman2015public}, and \cite{Fong:2015vc} discuss other settings where public information may be detrimental.} This relates to the logic of our model, where the availability of public information makes it impossible to pool incentive constraints across states, and thus worsens incentive provision within the relationship. Public information in our model also improves the worst possible equilibrium payoffs for both players; this decreases the surplus and thus worsens intertemporal incentives.\footnote{This point relates to an insight from \cite{baker1994subjective}. There, objective performance measures, rather than transparency, improve the players' outside options and make cooperation within the relationship more difficult to sustain.}

\section{Conclusion}

In our model, incomplete information transmission does not reflect a failure to motivate communication, but instead is an instrument for managing decision-making.
 This finding relies on the capacity of voluntary transfers to credibly support any monotone message rule at no welfare cost. It suggests that when modeling strategic communication in applied settings, it is crucial to understand whether monetary or non-monetary transfers (such as wages or favours) are available, because our implications differ significantly from those of the standard literature on strategic communication without transfers. In fact, one interpretation of our model is that voluntary transfers endogenously endow the privately-informed sender with the ability to commit to any monotone message rule, even with impatient players. Such commitment is the premise of the literature on Bayesian persuasion (\citealt{kamenica2011bayesian}). So, our analysis extends the applicability of the Bayesian persuasion framework to settings without commitment but with transfers.

Our model is remarkably tractable and thus allows for a thorough treatment of repeated interactions. This analysis produces a rich and intuitive set of results. In particular, incomplete information transmission is implemented only for high-conflict states, and only if the receiver's decision-making is too responsive to information. One implication is that with constant bias, pooling does not occur. In contrast, in the standard constant-bias cheap-talk game (\citealt{crawford1982strategic}), information transmission is always incomplete, and this is generally exacerbated in high (low) states if the sender is upward- (downward-) biased.

Our result that adding public information worsens optimal equilibria highlights the benefits of an `arms-length' approach where information and control are separated. For instance, giving the decision right to the sender may worsen the relationship by increasing the players' worst equilibrium payoffs and thus reducing the severity of punishments. Relatedly, introducing mediators who control the flow of information from the sender to the receiver cannot improve the relationship. This is because it is optimal to give the sender as much control over the release of information as possible.  

We hope that future work will use our tractable framework to study other challenging problems in strategic communication. For example, one might examine the case with multiple senders and receivers, possibly connected by a communication network. Another promising avenue would be to allow for costly information acquisition.

\begin{appendices}

\section{Stationarity}

This appendix specifies necessary and sufficient conditions for equilibrium, and proves Lemma \ref{lem:stationary}.

To show that the set of equilibrium payoffs is compact, restrict decisions and transfers to compact sets $d\in[-\overline{d},\overline{d}]$ and $\tau,t,T\in [-\overline{t},\overline{t}]$. Under this restriction and under Assumption~\ref{CS}, it can be shown that the set of equilibrium payoffs is compact (see, for example, \citealt{mailath2006repeated}). Now, observe that this restriction is without loss of generality if the bounds $\overline{d}$ and $\overline{t}$ are chosen to be large enough that (in any equilibrium) decisions and transfers are interior. Indeed, under Assumption~\ref{CS}, we can show that such bounds exist.

Let $V$ be the set of equilibrium payoffs. We now show that the set $\overline V$ of optimal equilibrium payoffs is  given by \eqref{simplex}.
Consider an optimal equilibrium payoff vector $\left( v_{S}^{\ast
},v_{R}^{\ast }\right) $ that maximizes the joint payoff, so that $v_{S}^{\ast }+v_{R}^{\ast }=\overline{v}$, and let $\sigma_*$ be an equilibrium supporting $\left( v_{S}^{\ast
},v_{R}^{\ast }\right) $.
Let $\left( v_{S},v_{R}\right) $ be any point in $\overline V$. 
Notice that we can modify $\sigma_*$ to produce $\left( v_{S},v_{R}\right) $ by changing only the ex-ante transfer in the first period from $\tau^*$ to $\tau=\tau^*+(v_{S}^{\ast }-v_{S})/(1-\delta)$. This modification affects the players' incentives only at the ex-ante round of the first period. Each player is willing to make the ex-ante transfer $\tau$ because $v_{S} \geq \underline{v}_{S}$ and $v_{R} \geq \underline{v}_{R}$ by definition of $\overline V$. Thus, the modified strategy profile is an equilibrium. Moreover, it is an optimal equilibrium, because $\overline v $ is the maximum equilibrium joint payoff. Finally, if $v_S<\underline v_S$ or $v_R<\underline v_R$, then $(v_S,v_R)$ cannot be supported in equilibrium. We conclude that $\overline V$ is the set of optimal equilibrium payoffs.

A message rule $\mu(\theta)$, a decision rule $\rho(m)$, transfer rules $\tau$, $t(m)$, $T(m)$, continuation payoff function $v_{i}(m)$, for each $i\in \{S,R\}$, and punishment decision $d^P$ and message $m^P$ constitute an equilibrium if and only if the following six conditions hold (see, for example, \citealt{mailath2006repeated}):

\begin{enumerate}
\item[C1.] Each player is willing to make ex-ante transfer $\tau$:%
\begin{eqnarray*}
v_{S} = (1-\delta )[-\tau +\mathbb{E}[u_{S}(\rho(\mu(\theta
)),\theta )-t(\mu(\theta ))-T(\mu(\theta) )]]+\delta \mathbb{E}%
[v_{S}(\mu(\theta) )]\geq \underline{v}_{S};\\
v_{R} = (1-\delta )[\tau +\mathbb{E}[u_{R}(\rho(\mu(\theta
)),\theta )+t(\mu\left( \theta \right) )+T(\mu(\theta
))]]+\delta \mathbb{E}[v_{R}(\mu(\theta) )]\geq \underline{v}_{R}.
\end{eqnarray*}

\item[C2.] For each state $\theta$, the sender is willing to send  message $\mu(\theta)$ and to make interim transfer $t(\mu(\theta) )$.

\begin{enumerate}
\item There is no profitable deviation to another message -- interim-transfer pair\\ $\left(\mu(\hat{\theta}),t(\mu(\hat{\theta}))\right)$ that is observed on the equilibrium path: 
\begin{eqnarray*}
&&(1-\delta )[u_{S}(\rho(\mu(\theta )),\theta )-t(\mu(\theta
))-T(\mu(\theta) )]+\delta v_{S}(\mu(\theta) ) \\
& \geq &(1-\delta )[u_{S}(\rho(\mu(\hat{\theta})),\theta
)-t(\mu(\hat{\theta}))-T(\mu(\hat{\theta}) )]+\delta v_{S}(\mu(\hat \theta) )\text{~for all%
}~\theta ,\hat{\theta}\in [0,1].
\end{eqnarray*}

\item There is no profitable deviation to some pair $(\hat{m},\hat{t})$ that is never observed on the equilibrium path:%
\begin{eqnarray*}
&&(1-\delta )[u_{S}(\rho(\mu(\theta )),\theta )-t(\mu(\theta
))-T(\mu(\theta ))]+\delta v_{S}(\mu(\theta ))\\
&\geq &(1-\delta
)u_{S}(d^P,\theta )+\delta \underline{v}_{S}~\text{for all}~\theta\in [0,1].
\end{eqnarray*}
Here, we specify that following any such deviation, the receiver chooses punishment decision $d^P$.
\end{enumerate}

\item[C3.] The receiver is willing to make interim transfer $t(m)$:
\begin{eqnarray*}
&& (1-\delta )[u_{R}(\rho(m),m
)+t(m)+T(m )]+\delta v_{R}(m) \\
&\geq & (1-\delta )%
u_{R}(\rho_R(m),m)+\delta \underline{v}_{R}~\text{for
all}~m\in \mu([0,1]).
\end{eqnarray*}
\item[C4.] The receiver is willing to choose decision $\rho(m)$ on-path and $d^P$ off-path.
\begin{enumerate}
\item After an on-path message -- interim-transfer pair, the receiver is willing to choose decision $\rho(m)$:
\begin{eqnarray*}
&& (1-\delta )[u_{R}(\rho(m),m )+T(m )]+\delta 
v_{R}(m)\\ 
&\geq& (1-\delta )u_{R}(\rho_R(m),m )+\delta \underline{v}_{R}~\text{for all}~m\in \mu([0,1]).
\end{eqnarray*}
\item After an off-path message -- interim-transfer pair, the receiver is willing to choose decision $d^P$:
\begin{eqnarray*}
(1-\delta )u_{R}(d^P,m^P )+\delta (\overline{v}-\underline{v}_S) &\geq& (1-\delta)u_{R}( \rho_R(m^P) ,m^P) + \delta \underline{v}_{R}.
\end{eqnarray*}
Here, we specify that following any deviation by the sender, the receiver believes that the state is in $m^P\subset [0,1]$.
\end{enumerate}

\item[C5.] Each player is willing to make ex-post transfer $T(m)$: 
\begin{align*}
-(1-\delta )T(m )+\delta v_{S}(m)&\geq \delta \underline{v}_{S}~\text{for all}~m\in \mu([0,1]);\\
(1-\delta )T(m )+\delta v_{R}(m )&\geq \delta \underline{v}_{R}~\text{for all}~m\in \mu([0,1]).
\end{align*}

\item[C6.] The continuation payoffs can be supported in equilibrium:%
\begin{align*}
\left( v_{S}(m),v_{R}(m)\right) &\in V~\text{for all}~m\in \mu([0,1]).%
\end{align*}
\end{enumerate}

\begin{proof}[Proof of Lemma \ref{lem:stationary}]
We have already shown that the set of optimal equilibrium payoffs is $\overline V$ given by (\ref{simplex}). In any optimal equilibrium, continuation is optimal: $v_{S}(m)+v_{R}\left( m \right) =\overline{v}$ for all $m\in \mu([0,1]) $.
Otherwise, one could  increase $v_S(m)$ and $v_{R}(m)$ without violating Condition C6, thereby relaxing the constraints of Conditions C1--C5 and increasing joint payoff $v_S + v_R$.

An optimal equilibrium $\sigma$ with zero first-period ex-ante transfers clearly exists. Let $(v_S,v_R)\in \overline V$ be the payoff vector under $\sigma$. We will modify $\sigma$ to construct an optimal stationary equilibrium with the same payoff vector. Let $\mu(\theta)$, $\rho(m)$, $t(m)$, $T(m)$, and $v_i(m)$, for each $i\in \{S,R\}$, be the message rule, decision rule, transfer rules, and continuation payoff function in the first period on the equilibrium path of $\sigma$. Define $T^{\ast }\left( m \right) $ by
\begin{align*}
-\left( 1-\delta \right) T^{\ast }\left( m \right) +\delta
v_{S} &= -(1-\delta )T(m )+\delta v_{S}(m ).	
\end{align*}
Since $v_{S}+v_{R}=v_{S}\left( m \right) +v_{R}\left( m \right) =\overline{v}$ by optimality of $\sigma$, we also have 
\begin{align*}
\left( 1-\delta \right) T^{\ast }\left( m \right) +\delta
v_{R} &= (1-\delta )T(m )+\delta v_{R}(m ).	
\end{align*}

Consider the following stationary strategy profile $\sigma_ *$. On the equilibrium path,  $\mu(\theta )$, $\rho(m)$, $\tau=0$, $t(m)$,  and $T^*(m )$ are played in each period. Following any deviation, except for an undetectable deviation by the sender as in Condition C2(a), play proceeds according to $\sigma$. By construction, the sender's and receiver's expected payoffs under $\sigma_*$ are the same as under $\sigma$.

We now show that $\sigma_*$ constitutes an equilibrium. In each period the constraints of Conditions C1 -- C5 continue to hold under $\sigma_*$ because they are identical to the first-period constraints under $\sigma$. Condition C6 holds because $(v_S,v_R)$ belongs to $\overline V$ by supposition.

Finally, by modifying the first-period ex-ante transfer in $\sigma_*$ from $0$ to $\tau=(v_{S}-\hat{v}_{S})/(1-\delta)$, we can support any optimal equilibrium payoff vector $(\hat{v}_S,\hat{v}_R)\in \overline V$.
\end{proof}

\begin{lemma}\label{mondelta}
If $0\leq \delta < {\delta}'<1$, then the corresponding best joint payoffs satisfy $\overline v\leq {\overline v}'$, and the corresponding worst equilibrium payoffs satisfy ${\underline v}_S\geq {\underline v}_S'$ and ${\underline v}_R\geq {\underline v}_R'$. %
\end{lemma}
\begin{proof}
Given $\delta\in [0,1)$, consider a stationary optimal equilibrium $\sigma_*$ with zero ex-ante transfers. Let this equilibrium produce an equilibrium payoff vector $(v^*_S,v^*_R)$, with $v^*_S+v^*_R=\overline{v}$. We can support any optimal equilibrium payoff vector $(v_S,v_R)\in \overline V$ under $\delta$ by modifying the first-period ex-ante transfer in $\sigma_*$ from $0$ to $\tau=(v_{S}^{\ast }-v_{S})/(1-\delta)$. Notice that Conditions C1 -- C6 continue to hold under $\delta'\in(\delta,1)$, after replacing $\tau=(v_{S}^{\ast }-v_{S})/(1-\delta)$ with $\tau'=(v_{S}^{\ast }-v_{S})/(1-\delta')$, because
\begin{equation*}
	\frac{\delta'}{1-\delta'}(v_i^*-\underline{v}_i)\geq \frac{\delta}{1-\delta}(v_i^*-\underline{v}_i)\text{ for each }i\in \{S,R\}.
\end{equation*}
Therefore, the set $\overline V$ is self-generating under $\delta'$, which proves the lemma (see, for example, \citealt{mailath2006repeated}).
\end{proof}

\section{Equilibrium}\label{app:eq}

\begin{proof}[Proof of Proposition \ref{pr:impl}]
Consider a stationary equilibrium $\sigma$ that produces a joint payoff $v$. Let $\mu(\theta)$, $\rho(m)$, $\tau$, $t(m)$, and $T(m)$ be the message rule, decision rule, and transfer rules on the equilibrium path of $\sigma$. Define $U_S(\theta)$ as the one-period payoff of the sender if the state is $\theta$,
\begin{equation*}
	U_S(\theta)=u_{S}(\rho(\mu(\theta )),\theta )-p(\theta),
\end{equation*}
where $p(\theta)$ is the net one-period transfer of the sender if the state is $\theta$, 
\begin{equation*}
p(\theta)=\tau+t(\mu(\theta))+T(\mu(\theta) ).
\end{equation*}
Condition C2 (a) requires that
\begin{equation*}
U_S(\theta) \geq  u_{S}(\rho(\mu(\hat{\theta})),\theta )-p(\hat{\theta})\text{ for all $\theta,\hat{\theta}\in [0,1]$}. %
\end{equation*}
Since $\partial^2 u_S(d,\theta)/\partial d \partial \theta>0$ by Assumption~\ref{CS}, this inequality holds if and only if $\rho(\mu(\theta))$ is non-decreasing in $\theta$ and 
\begin{equation}
U_S(\theta)=U_S(0)+\int_0^\theta \frac{\partial u_S}{\partial \theta} (\rho(\mu(\tilde\theta)),\tilde\theta) d\tilde\theta\text{ for all }\theta\in [0,1], \label{envelope}	
\end{equation}
by Proposition 1 of \cite{rochet1987necessary} and Corollary 1 of \cite{milgrom2002envelope}.

Adding the constraint of Condition C4 (a) and the sender's constraint of Condition C5, and taking into account that $v_S+v_R =v$, gives (\ref{enforcement}).

Conversely, suppose that $\mu(\theta)$ and $\rho(m)$ are such that $\rho(\mu(\theta))$ is non-decreasing in $\theta$ and (\ref{enforcement}) holds. We construct transfer rules and punishment variables that satisfy Conditions C1 -- C6, and thus constitute a stationary equilibrium. We consider the case $\delta>0$; the case $\delta=0$ is simpler but slightly different. Let $T(m)=0$ and $\tau=\mathbb{E}[u_{S}(\rho(\mu(\theta)),\theta)-t(\mu(\theta))]-\underline{v}_S$. Moreover, let $t(m)$ and $m^P$ be defined by (\ref{eq:h}) and (\ref{eq:ts}), and let $d^P=\rho(m^P)$.  

Equation (\ref{eq:ts}) assumes that $t(m^P)=\inf_{m\in \mu([0,1])}t(m)$ for some message $m^P\in \mu([0,1])$. If this assumption does not hold, then we specify $m^P$ and $d^P$ as follows. By the Bolzano-Weierstrass theorem, there exists a sequence $\{m_k\}\in \mu([0,1])$ such that as $k\rightarrow \infty$, $t(m_k)\rightarrow \inf_{m\in \mu([0,1])} t(m)$, $\theta(m_k)\rightarrow \theta_\star$, and $\rho(m_k)\rightarrow d_\star$ for some $\theta(m_k)\in m_k$, $\theta_\star\in [0,1]$, and $d_\star\in \mathbb{R}$. Set $m^P=\theta_\star$ and $d^P=d_\star$. Since $u_R(d,\theta)$ is continuous, and (\ref{enforcement}) holds for all $(m_k,\rho(m_k))$, it also holds for $(m^P,d^P)$.

Notice that  the left-hand side of (\ref{enforcement}) is non-negative, so $v\geq \underline{v}_S+\underline{v}_R$.
Condition C6 holds because the continuation payoffs are $v_S=\underline{v}_S$ and $v_R=v-\underline{v}_S$. Condition C5 holds because Condition C6 holds and $T(m)=0$. The sender's constraint of Condition C1 holds with equality. The receiver's constraint of Condition C1 holds because it can be simplified to $v\geq \underline{v}_S +\underline{v}_R$. Condition C2 (a) holds because $\rho(\mu(\theta))$ is non-decreasing in $\theta$ and (\ref{envelope}) holds. Condition C2 (b) holds because by deviating to a message-transfer pair $(\hat{m},\hat{t})$ that is not observed on the equilibrium path, the sender induces $d^P=\rho\left(m^P\right)$, which he can induce more cheaply on the equilibrium path with message $m^P$ and zero interim transfer $t\left( m^P\right) =0$. This argument assumes that there exists $m^P$ such that $t(m^P)=\inf_{m\in \mu([0,1])}t_S(m)$. Condition C2 (b) still holds even if such $m^P$ does not exist. This is because Condition C2 (a) holds for each $\hat{\theta}=\theta(m_k)$, and thus in the limit $k \to \infty$. But in this limit, Condition C2 (a) coincides with Condition C2 (b). Condition C4 is a restatement of (\ref{enforcement}). Note that, as for Condition C2 (b), a limiting argument needs to be made for Condition C4 (b) if $\inf_{m\in \mu([0,1])}t(m)$ is not attained by any $m^P$. Condition C3 holds because Condition C4 holds and $t(m)$ is non-negative.
\end{proof}

\begin{proof}[Proof of Proposition~\ref{pr:opt}]
By Lemma~\ref{lem:stationary} and Proposition~\ref{pr:impl}, in an optimal equilibrium, the decision and message rules solve
\begin{gather}
	\overline{v}=\max_{\mu,\rho}\mathbb{E}[u(\rho(\mu(\theta)),\theta)] \label{vbar}\\
	\text{subject to } \rho(\mu(\theta))\text{ is non-decreasing in }\theta, \label{vbar_a}\\
	w(\rho(m),m)\leq L(\overline{v})\text{ for all }m\in \mu([0,1]).  \label{vbar_b}
\end{gather}

Without loss of generality, we can restrict attention to monotone message rules. The argument is similar to the revelation principle. To this end, consider any $\mu$ and $\rho$ that satisfy (\ref{vbar_a}) and (\ref{vbar_b}). Define new rules $\tilde\mu$ and $\tilde \rho$ as $\tilde\mu(\tilde\theta)=\{\theta:\rho(\mu(\theta))= \rho( \mu(\tilde\theta))\}$ for all $\tilde \theta \in [0,1]$ and $\tilde \rho(\tilde m)=\rho(\mu(\tilde\theta(\tilde m)))$ for all $\tilde m\in \tilde \mu ([0,1])$, where $\tilde\theta(\tilde m)$ is an arbitrary state $\tilde\theta\in \tilde m$. It is easy to see that $\tilde \rho  (\tilde m)$ is independent of the choice of a representative state $\tilde\theta\in \tilde m$ and that $\tilde \rho(\tilde \mu (\theta))=\rho(\mu (\theta))$ for all $\theta\in [0,1]$. Since $\rho(\mu (\theta)) $ is non-decreasing in $\theta$ by (\ref{vbar_a}), $\tilde \rho(\tilde \mu (\theta)) $ is also non-decreasing in $\theta$ and $\tilde \mu $ is monotone. Moreover, since each set $\tilde m\in \tilde \mu([0,1])$ is the union of some disjoint sets $m\in \mu ([0,1])$ and the constraint (\ref{vbar_b}) holds for $\rho(m)$ for each $m\in \mu ([0,1])$, the constraint (\ref{vbar_b}) also holds for $\tilde \rho(\tilde m)$ for each $\tilde m\in \tilde \mu([0,1])$.

Consider a relaxed problem
\begin{gather*}
	\overline{v}=\max_{\mu,\rho}\mathbb{E}[u(\rho(\mu(\theta)),\theta)]\\
	\text{subject to } \mu\text{ is monotone}, \nonumber \\
	w(\rho(m),m)\leq L(\overline{v})\text{ for all }m\in \mu([0,1]). \label{relcon}
\end{gather*}
We can solve this relaxed problem in two steps. First, for a given monotone message rule $\mu$, the optimal decision rule specifies that the second-best decision is taken for each on-path message $m \in \mu([0,1])$. Second, given this optimal decision rule, the optimal message rule is clearly $\mu_*$ defined by (\ref{mustar}). To prove that the solution to the relaxed problem also solves the original problem (\ref{vbar}), it suffices to show that $\rho_*(m)$ is non-decreasing in $m$.

We first rewrite the constraint of the problem (\ref{dstar}) as $d\in D(m)$ where $D(m)$ is non-decreasing in $m$ in the strong set order. Since $u_R(d,\theta)$ is strictly concave in $d$ and has a unique maximum, $w(d,m)$ is strictly convex in $d$ and has a unique minimum. Taking into account that $w(\rho_R(m),m)=0$ and $L(\overline{v})\geq 0$, we have that the set of decisions $d$ that satisfy the constraint of the problem (\ref{dstar}) is a nonempty closed convex set and thus can be written as $D(m)=[\rho_-(m),\rho_+(m)]$, where $\rho_-(m)$ and $\rho_+(m)$ satisfy the constraint with equality. Moreover, since $u_R(d,\theta)$ is concave in $d$ and is supermodular, $w(d,m)$ is non-increasing in $d$ and non-decreasing in $m$ if $d<\rho_R(m)$, and $w(d,m)$ is non-decreasing in $d$ and non-increasing in $m$ if $d>\rho_R(m)$. This implies that $\rho_-(m)$ and $\rho_+(m)$ are non-decreasing in $m$, and thus $D(m)$ is non-decreasing in $m$. Taking into account that $u(d,m)$ is strictly concave and has increasing differences,   $\rho_*(m)=\arg\max_{d\in D(m)}u(d,m)$ is non-decreasing in $m$, as follows, for example, from Theorem 4 of \cite{milgrom1994monotone}.
\end{proof}

\begin{proof}[Proof of Proposition \ref{prop:localpool}]
Consider an interval $m=(\theta_1,\theta_2)$. Let all expectations condition on $m$, that is, $\mathbb{E}[x]=\mathbb{E}[x|m]$.
Let the expected state given $m$ be $\hat{\theta}=\mathbb{E}\left[\theta \right]$. Denote $\hat\rho_* = \rho_*(\hat\theta)$, $\hat\rho_R = \rho_R(\hat\theta)$, and $(\Delta\theta)^2 = \mathbb{E}\left[(\theta-\hat\theta)^2 \right]$. In our calculations, we fix $\hat\theta$ and consider small $m$: $\theta_1\uparrow \hat \theta$ and $\theta_2 \downarrow \hat{\theta}$. 

\noindent To start, we approximate $\mathbb{E}\left[  \rho_*(\theta)-\hat\rho_* \right]$. Suppose $\hat\theta$ is high-conflict, so the receiver's incentive constraint for singleton message $\hat\theta$ is strictly binding (and thus is also strictly binding for nearby states and intervals),
\begin{gather}
u_R(\rho_R(\theta),\theta)-u_R(\rho_*(\theta),\theta) = L(\overline{v}) \text{ for $\theta \in m$,}  \label{eqn:rhostar_nonmyopic} \\
\mathbb{E}[u_R(\rho_R(m),\theta)-u_R(\rho_*(m),\theta)] = L(\overline{v}). \nonumber
\end{gather}
Differentiating (\ref{eqn:rhostar_nonmyopic}) and noting that ${u_R}'_d(\rho_R(\theta),\theta)=0$, ${u_R}''_{d\theta}(d,\theta)>0$, and $\rho_*(\theta) \neq \rho_R(\theta)$:
\begin{gather*}
{u_R}'_d(\rho_R(\theta),\theta) \rho'_R(\theta) + {u_R}'_\theta(\rho_R(\theta),\theta)-{u_R}'_d(\rho_*(\theta),\theta)\rho'_*(\theta) - {u_R}'_\theta(\rho_*(\theta),\theta) = 0 \text{, so} \nonumber\\
\rho'_*(\theta) = \frac{{u_R}'_\theta(\rho_R(\theta),\theta)-{u_R}'_\theta(\rho_*(\theta),\theta)}{{u_R}'_d(\rho_*(\theta),\theta)} \in (0,\infty). \label{rhoprimenotzero}
\end{gather*}
We can then write
\begin{gather*}
	\mathbb{E}\left[ \rho_*(\theta)-\hat\rho_*\right] = \mathbb{E}\left[ \rho_*'(\hat \theta)(\theta-\hat\theta)+\frac{1}{2}\rho_*''(\hat \theta)(\theta-\hat\theta)^2+o(\theta-\hat\theta)^2 \right] 
	= \frac{1}{2}\rho_*''(\hat \theta)(\Delta\theta)^2 +o(\Delta \theta)^2 .
\end{gather*}
\noindent Next, we approximate $\mathbb{E}\left[u(\rho_*(m),\theta)-u(\hat\rho_*,\theta )) \right]$. Starting with
\[
u_R(\hat{\rho}_R,\hat{\theta})-u_R(\hat\rho_*,\hat{\theta})
=  L(\overline{v}) = \mathbb{E}[u_R(\rho_R(m),\theta)-u_R(\rho_*(m),\theta)],\]
we rearrange to obtain
\[\mathbb{E}[u_R(\rho_R(m),\theta)-u_R(\hat{\rho}_R,\hat{\theta})] = \mathbb{E}[u_R(\rho_*(m),\theta)-u_R(\hat\rho_*,\hat\theta)]. \]
The left-hand side expands to become
\begin{gather*}
\mathbb{E}[u_R(\rho_R(m),\theta)-u_R(\hat{\rho}_R,\hat{\theta})]\\
=  \mathbb{E} \left[
\begin{matrix}
 {u_R}'_{d}(\hat{\rho}_R,\hat{\theta}) (\rho_R(m)-\hat\rho_R) + {u_R}'_{\theta}(\hat{\rho}_R,\hat{\theta}) (\theta - \hat{\theta})
 + {u_R}''_{d\theta}(\hat{\rho}_R,\hat{\theta}) (\rho_R(m)-\hat{\rho}_R)(\theta - \hat{\theta})\\
  + \frac{1}{2}{u_R}''_{dd}(\hat{\rho}_R,\hat{\theta}) (\rho_R(m)-\hat\rho_R)^2  + \frac{1}{2}{u_R}''_{\theta\theta}(\hat{\rho}_R,\hat{\theta}) (\theta - \hat{\theta})^2 
 +o((\rho_R(m)-\hat{\rho}_R)^2 + (\theta - \hat{\theta})^2)
\end{matrix}
\right]	\\
=
 \frac{1}{2} {u_R}''_{dd}(\hat{\rho}_R,\hat{\theta}) (\rho_R(m)-\hat{\rho}_R)^2 + \frac{1}{2}{u_R}''_{\theta\theta}(\hat{\rho}_R,\hat{\theta}) (\Delta \theta)^2  +o((\rho_R(m)-\hat{\rho}_R)^2 + (\Delta \theta)^2)\\
= \frac{1}{2}{u_R}''_{\theta\theta}(\hat{\rho}_R,\hat{\theta})(\Delta \theta)^2 +o(\Delta \theta)^2;
\end{gather*}
while the right-hand side expands to become
\begin{gather*}
\mathbb{E}[u_R(\rho_*(m),\theta)-u_R(\hat\rho_*,\hat\theta)] = \mathbb{E} \left[\begin{matrix}
	{u_R}'_d(\hat\rho_*,\hat\theta)(\rho_*(m)-\hat\rho_*)+ \frac{1}{2}{u_R}''_{\theta \theta}(\hat\rho_*,\hat\theta)(\theta-\hat\theta)^2\\
	+o(\theta-\hat\theta)^2+o(\rho_*(m)-\hat\rho_*)
\end{matrix} \right].
\end{gather*}
Equating both sides then yields
\begin{gather*}
\rho_*(m)-\hat\rho_*=\frac{{u_R}''_{\theta\theta}(\hat{\rho}_R,\hat{\theta})-{u_R}''_{\theta \theta}(\hat\rho_*,\hat\theta) }{2 \, {u_R}'_d(\hat\rho_*,\hat\theta)}(\Delta \theta)^2 +o(\Delta \theta)^2.
\end{gather*}
Substituting in this last expression,
\begin{gather*}
\mathbb{E}\left[u(\rho_*(m),\theta)-u(\hat\rho_*,\theta )) \right] = \mathbb{E}\left[u'_{d}(\hat\rho_*,\theta) (\rho_*(m)-\hat\rho_*)  +o(\rho_*(m)-\hat\rho_*) \right]\\
=u'_{d}(\hat\rho_*,\hat{\theta}) \frac{{u_R}''_{\theta\theta}(\hat{\rho}_R,\hat{\theta})-{u_R}''_{\theta \theta}(\hat\rho_*,\hat\theta) }{2 {u_R}'_d(\hat\rho_*,\hat\theta)} (\Delta \theta)^2  + o(\Delta \theta)^2.
\end{gather*}
It is better to pool than separate all states in interval $m$ if and only if
\begin{align*}
0 & >  \mathbb{E} \left[u(\rho_*(\theta),\theta)-u(\rho_*(m),\theta) \right]\\
&=\mathbb{E} \left[
\begin{matrix}
u''_{d\theta}(\hat\rho_*,\hat\theta)(\theta-\hat \theta)(\rho_*(\theta)-\hat\rho_*)+\frac{1}{2}u''_{dd}(\hat\rho_*,\hat\theta)(\rho_*(\theta)-\hat\rho_*)^2\\
 + u'_d(\hat\rho_*,\hat \theta)(\rho_*(\theta)-\hat\rho_*) + u(\hat\rho_*,\hat\theta)-u(\rho_*(m),\hat\theta) + o(\Delta\theta)^2
\end{matrix} \right]\\
&=\begin{Bmatrix}
u''_{d\theta}(\hat\rho_*,\hat\theta)\rho'_*(\hat \theta)+\frac{1}{2}u''_{dd}(\hat\rho_*,\hat\theta)(\rho'_*(\hat \theta))^2	\\
 + \frac{1}{2}u'_d(\hat\rho_*,\hat \theta)\left( \rho_*''(\hat \theta) + \frac{{u_R}''_{\theta \theta}(\hat\rho_*,\hat\theta)-{u_R}''_{\theta\theta}(\hat{\rho}_R,\hat{\theta}) }{{u_R}'_d(\hat\rho_*,\hat\theta)} \right)
\end{Bmatrix} (\Delta \theta)^2
   + o(\Delta\theta)^2;
\end{align*}	
so full disclosure can be improved in some neighbourhood of $\hat\theta$ if
\begin{align*}
0 & > u''_{d\theta}(\hat\rho_*,\hat\theta)\rho'_*(\hat\theta)+\frac{1}{2}u''_{dd}(\hat\rho_*,\hat\theta)(\rho'_*(\hat\theta))^2	 + \frac{1}{2}  u'_d(\hat\rho_*,\hat\theta) \left( \rho_*''(\hat\theta) +\frac{{u_R}''_{\theta \theta}(\hat\rho_*,\hat\theta)-{u_R}''_{\theta\theta}(\hat\rho_R,\hat\theta)}{{u_R}'_d(\hat\rho_*,\hat\theta)} \right) \\
& = \frac{1}{2}{\rho'_*(\hat\theta)u''_{dd}(\rho_*( \hat\theta),\hat\theta)}\left(\rho'_*(\hat\theta)-2 \tilde \rho_{FB}'(\hat\theta)  \right)	+ {u'_d(\rho_*( \hat\theta),\hat\theta)}\left( \frac{\rho_*''(\hat\theta)}{2}  -  \frac{ \rho_*'(\hat\theta)\varphi''(\hat\theta)}{2 \varphi'(\hat \theta)}  \right),	
\end{align*}
where the equality can be verified by substituting $\tilde\rho_{FB}'(\theta)$, $\rho_*'(\hat\theta)$, $\varphi'(\hat\theta)$, and $\varphi''(\hat\theta)$.
Dividing this expression by $\rho_*'(\hat\theta) > 0$, we obtain (\ref{eqn:rhoprime}). 

Two final observations: first, if $\rho_*(\hat\theta)= \rho_{FB}(\hat\theta)$, then $\rho'_*(\hat\theta)=\rho'_{FB}(\hat\theta)=\tilde{\rho}'_{FB}(\hat\theta)>0$ and $u'_d(\rho_*( \hat\theta),\hat\theta)=0$; in which case (\ref{eqn:rhoprime}) must fail. Second, in Footnote \ref{f:locfb}, we remark that $\tilde\rho_{FB}'(\theta)$ is the slope of the 'locally first-best' decision rule.
To justify this interpretation, we approximate $\mathbb E[u(\hat \rho_*+a(\theta-\hat\theta),\theta)]$ as follows
\[
\mathbb E[u(\hat \rho_*+a(\theta-\hat\theta),\theta)]=u(\hat \rho_*,\hat\theta) +\left(\frac{1}{2}u_{dd}''(\hat\rho_*,\hat\theta)a^2 +u_{d\theta}''(\hat \rho_*,\hat\theta) a +\frac{1}{2}u_{\theta\theta}''(\hat\rho_*,\hat\theta) \right)(\Delta \theta)^2 +o(\Delta\theta)^2.
\]
This expression is maximized at $a=\tilde\rho_{FB}'(\theta)$ when $(\Delta \theta)^2\rightarrow 0$.
\end{proof}

\section{Quadratic Payoffs}\label{AppQuad}

\begin{lemma}\label{l:d2u}
Under Assumption \ref{quad}, 
\begin{enumerate}
\item $u_*(\theta)$ is continuously differentiable in $\theta$ for all $\theta \in [0,1]$;
\item $u_*(\theta)$  is twice continuously differentiable in $\theta$ for almost all $\theta \in [0,1]$, with
\begin{equation}\label{d2u}
u_*''(\theta)=(2\rho'_{FB}(\theta)-\rho'_*(\theta))\rho'_*(\theta),\text{ if }|\rho_{FB}(\theta)-\rho_R(\theta)| \neq \ell.
\end{equation}
\end{enumerate}
\end{lemma}
\begin{proof}
Since
\begin{equation*}\label{dstartheta}
\rho_*(\theta)=
\begin{cases}
\rho_R(\theta)+\ell, &\text{if } \rho_{FB}(\theta)-\rho_R(\theta) > \ell,\\
\rho_{FB}(\theta), &\text{if } | \rho_{FB}(\theta)-\rho_R(\theta)| \leq \ell,\\
\rho_R(\theta)-\ell, &\text{if } \rho_{FB}(\theta)-\rho_R(\theta) < -\ell,\\	
\end{cases}	
\end{equation*}
we have
\begin{align}
\rho'_*(\theta)=&
\begin{cases}
\rho'_R(\theta), &\text{if } |\rho_{FB}(\theta)-\rho_R(\theta)| > \ell,\\
\rho'_{FB}(\theta), &\text{if } | \rho_{FB}(\theta)-\rho_R(\theta)| < \ell.\\	
\end{cases}\label{dprime}
\end{align}
Further, $u_*(\theta)$ is continuously differentiable in $\theta$ for all $\theta \in [0,1]$, with 
\begin{equation*}\label{u*'}
u_*'(\theta)=
\begin{cases}
\rho'_{FB}(\theta) \rho_*(\theta) + (\rho_{FB}(\theta) - \rho_*(\theta))\rho'_*(\theta), &\text{if } |\rho_{FB}(\theta)-\rho_R(\theta)| \neq \ell,\\
\rho'_{FB}(\theta) \rho_*(\theta), &\text{if } | \rho_{FB}(\theta)-\rho_R(\theta)| = \ell.\\	
\end{cases}
\end{equation*}
Finally, since $\rho_*(\theta)$ is twice continuously differentiable in $\theta$ everywhere except at most two states where $|\rho_{FB}(\theta)-\rho_R(\theta)| = \ell$, it follows that $u_*(\theta)$ is twice continuously differentiable everywhere except at most these two states, with $u_*''(\theta)$ given by (\ref{d2u}).
\end{proof}

\begin{lemma}\label{elldelta}
If the receiver is highly responsive, $\ell$ is strictly increasing in $\delta$.
\end{lemma}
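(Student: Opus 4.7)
The plan is to reduce strict monotonicity of $\ell$ to showing $s(0) > 0$ for $s(\delta) := \overline{v}(\delta) - \underline{v}_{S}(\delta) - \underline{v}_{R}(\delta)$, and to establish that bound via Jensen's inequality applied to $u_{S}(\rho_{R}(\cdot),\cdot)$. From the definitions of $L$ and $\ell$, one has $\ell^{2}(\delta) = \frac{\delta}{c(1-\delta)}\,s(\delta)$. By Lemma~\ref{mondelta}, $V(\delta)$ is nondecreasing in inclusion, so $\overline{v}$ is nondecreasing and $\underline{v}_{S},\underline{v}_{R}$ are nonincreasing in $\delta$; hence $s$ is nondecreasing. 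Since $\delta/(1-\delta)$ is strictly increasing on $[0,1)$, once $s(0)>0$ is established, $s(\delta)\geq s(0)>0$ for all $\delta$, and for any $0\leq \delta_{1}<\delta_{2}<1$,
\[
\ell^{2}(\delta_{2}) \;\geq\; \frac{\delta_{2}\,s(\delta_{1})}{c(1-\delta_{2})} \;>\; \frac{\delta_{1}\,s(\delta_{1})}{c(1-\delta_{1})} \;=\; \ell^{2}(\delta_{1}),
\]
giving strict monotonicity (the case $\delta_{1}=0$ is immediate as $\ell^{2}(\delta_{2})>0$).

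At $\delta=0$, $\ell(0)=0$, so Proposition~\ref{PrOpt} forces $\rho_{*}(m)=\rho_{R}(m)$ and $u_{*}(\theta)=u(\rho_{R}(\theta),\theta)$. A direct computation gives $u_{*}''(\theta)=a(2-a)<0$ when $a>2$, so by Lemma~\ref{poolsep} the optimal monotone message rule is complete pooling, $\mu_{*}=[0,1]$. Hence $\overline{v}(0)=u(\rho_{R}(\mathbb{E}[\theta]),[0,1])$ and, by Proposition~\ref{prop:spp}(2), $\underline{v}_{R}(0)=u_{R}(\rho_{R}(\mathbb{E}[\theta]),[0,1])$. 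Linearity of $u_{S}$ in $\theta$ then rewrites $s(0)=u_{S}(\rho_{R}(\mathbb{E}[\theta]),\mathbb{E}[\theta])-\underline{v}_{S}(0)$, so the task reduces to proving $\underline{v}_{S}(0)<u_{S}(\rho_{R}(\mathbb{E}[\theta]),\mathbb{E}[\theta])$.

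To bound $\underline{v}_{S}(0)$, I would plug the feasible point given by full separation $\mu(\theta)=\{\theta\}$ with $\theta^{p}=0$ into (\ref{eq:worst}); feasibility holds because at $\ell=0$ the constraint $\rho(m)\in\{\rho_{-}(m),\rho_{+}(m)\}$ collapses to $\rho(m)=\rho_{R}(m)$. The envelope identity underlying (\ref{eq:h}) shows the resulting objective equals $\mathbb{E}[u_{S}(\rho_{R}(\theta),\theta)-t_{S}(\theta)]\leq \mathbb{E}[u_{S}(\rho_{R}(\theta),\theta)]$ because $t_{S}\geq 0$. A direct computation under Assumption~\ref{quad} yields
\[
\frac{d^{2}}{d\theta^{2}}\,u_{S}(\rho_{R}(\theta),\theta) \;=\; a\bigl(2 - a(1+c)\bigr) \;<\; 0 \quad\text{whenever } a>2,\ c>0,
\]
so $u_{S}(\rho_{R}(\cdot),\cdot)$ is strictly concave in $\theta$. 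Non-degeneracy of $\theta$ (via $f>0$) and Jensen's inequality then give $\mathbb{E}[u_{S}(\rho_{R}(\theta),\theta)]<u_{S}(\rho_{R}(\mathbb{E}[\theta]),\mathbb{E}[\theta])$, hence $\underline{v}_{S}(0)<u_{S}(\rho_{R}(\mathbb{E}[\theta]),\mathbb{E}[\theta])$ and $s(0)>0$ as required.

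The main obstacle is the strict concavity step, which is exactly where high responsiveness enters decisively: for $a\leq 2$ the composite $u_{S}(\rho_{R}(\theta),\theta)$ can be convex in $\theta$ and Jensen reverses, so a different feasible point in (\ref{eq:worst}) would be needed. The remainder of the argument is routine bookkeeping around the monotonicity of $V(\delta)$ and the strict monotonicity of $\delta/(1-\delta)$.
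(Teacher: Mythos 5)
Your proof is correct and follows essentially the same route as the paper's: reduce strict monotonicity of $\ell$ to $\overline{v}-\underline{v}_S-\underline{v}_R>0$ at $\delta=0$ using Lemma~\ref{mondelta} and the factor $\delta/(1-\delta)$, then exploit that at $\delta=0$ a highly responsive receiver makes the relevant payoff strictly concave so that pooling strictly dominates separation. The paper gets the strict inequality slightly more directly (complete pooling strictly beats full separation in joint payoff, while $\underline{v}_S+\underline{v}_R$ is bounded by the full-separation joint payoff), whereas you compute $\overline{v}-\underline{v}_R$ explicitly and apply Jensen to $u_S(\rho_R(\cdot),\cdot)$; both are valid, and your only cosmetic slips are citing Lemma~\ref{poolsep} (rather than Lemmas~\ref{Pstar} and \ref{Gamma}) for optimality of complete pooling and fixing $\theta^p=0$ rather than $\theta^p\in\arg\min h$, neither of which affects the bound since (\ref{eq:worst}) minimizes over $\theta^p$.
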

\begin{proof}
If $\delta=0$, then $u_*''(\theta)<0$ for all $\theta\in [0,1]$. By Corollary \ref{c:separation}, the expected joint payoff is strictly higher under complete pooling than under full separation. Thus, Lemma~\ref{mondelta} implies that $\overline{v}-\underline{v}_S-\underline{v}_R>0$ and  $\ell$ is strictly increasing in $\delta$ for all $\delta\in [0,1)$.	
\end{proof}

\begin{proof}[Proof of Corollary~\ref{c:cs}]
Since, by Lemma \ref{elldelta}, the relational leeway $\ell$ increases with $\delta$, the set of high-conflict states consists of up to one interval that shrinks and eventually vanishes as $\delta$ increases (Figure \ref{Fig:UpBias}). Specifically, there exist $\delta^A,\delta^B \in (0,1)$ such that $\delta^A>\delta^B$, and the set of high-conflict states is
\begin{equation*}
	X=\begin{cases}
		\emptyset,&\text{if } \delta\in(\delta^A,1),\\
		[0,\theta_{\mathrm{c}})\text{ for some $\theta_{\mathrm{c}}\in(0,1)$},&\text{if } \delta\in (\delta^B,\delta^A),\\
		[0,1],&\text{if }\delta\in[0,\delta^B).
	\end{cases}
\end{equation*}
By Proposition~\ref{pr:sb}, full separation is optimal if $\delta\in(\delta^A,1)$, and complete pooling is optimal if $\delta\in[0,\delta^B)$. Moreover, it is optimal to pool the states below $\theta_*\in (\theta_{\mathrm{c}},1]$ and separate the rest if $\delta \in (\delta^B,\delta^A)$. The expected joint payoff under this message rule is
\begin{equation*}
v_* =F(\theta_*) u_*(m_* )+\int_{\theta_*}^1u_*(\theta )dF(\theta).
\end{equation*}
Thus, taking into account that 
\begin{equation*}
	\frac{dm_*}{d\theta_*}=\frac{f(\theta_*)}{F(\theta_*)}(\theta_*-m_*),
\end{equation*}
we have
\begin{equation}\label{FOC1}
\begin{aligned}
	\frac{dv_*}{d\theta_*}&=F(\theta_*)u_*'(m_* )\frac{dm_*}{d\theta_*}+f(\theta_*)u_*(m_* )-f(\theta_*)u_*(\theta_* )\\
	&=f(\theta_*)\left(u_*'(m_* )(\theta_*-m_*)+u_*(m_* )-u_*(\theta_* )\right),
\end{aligned}	
\end{equation}
and
\begin{align*}
	\frac{d^2 v_*}{d\ell d\theta_*}&=f(\theta_*)\left(\frac{d u_*'(m_*)}{d \ell }(\theta_*-m_*)+\frac{d u_*(m_*)}{d \ell}\right)\\
	&=f(\theta_*)\left((\rho'_{FB}(m_*)-\rho'_*(m_*))(\theta_*-m_*)+(\rho_{FB}(m_*)-\rho_*(m_*))\right)\\
	&< f(\theta_*)\left((\rho'_{FB}(m_*)-\rho'_*(m_*))(\theta_{\mathrm{c}}-m_*)+(\rho_{FB}(m_*)-\rho_*(m_*))\right)=0,
\end{align*}
where the inequality holds because $\theta_*>\theta_{\mathrm{c}}>m_*$ and $\rho'_{FB}(m_*)<\rho'_*(m_*)$, and the last equality holds because $\rho_{FB}(\theta_{\mathrm{c}})=\rho_*(\theta_{\mathrm{c}})$ and $\rho_{FB}(\theta)-\rho_*(\theta)$ is linear in $\theta$ for $\theta\in (0,\theta_{\mathrm{c}})$.
So $\theta_*$ is non-increasing in $\ell$, and $d\theta_*/d\ell<0$ if $\theta_*<1$, as follows, for example, from Theorem 1 of \cite{edlin1998strict}. 
\end{proof}

\begin{corollary}\label{c:csbeta}
Suppose the receiver is highly responsive and $\theta_{\mathrm{c}}\in (0,1)$. 
Keeping $\theta_{\mathrm{c}}$ constant, $\theta_*$ is strictly increasing in $\alpha={\rho'_{R}(\theta)}/{\rho'_{FB}(\theta)}$ if $\theta_*<1$. Moreover, $\theta_* \rightarrow \theta_{\mathrm{c}}$ as $\alpha \rightarrow 2$.
\end{corollary}
\begin{proof}
Denote $\rho_{FB}(\theta_{\mathrm{c}})=d_{\mathrm{c}}$, $\rho'_{FB}(\theta)=a_{FB}$, and $\rho'_R(\theta)=a_R$. Notice that
\[\rho_{FB} (\theta)=d_{\mathrm{c}}+a_{FB}(\theta-\theta_{\mathrm{c}})\text{ for all $\theta$},\]
and
\[
\rho_*(\theta)=
\begin{cases}
d_{\mathrm{c}}+a_R(\theta-\theta_{\mathrm{c}}), &\text{if $\theta<\theta_{\mathrm{c}}$},\\
d_{\mathrm{c}}+a_{FB}(\theta-\theta_{\mathrm{c}}), &\text{if $\theta\geq \theta_{\mathrm{c}}$}.
\end{cases}
\]
Thus, we can rewrite \eqref{FOC1} as follows:
\begin{align*}
\frac{dv_*}{d\theta_*}=&f(\theta_*)\bigg(\left[a_{FB}\ \rho_*(m_*) + a_R(\rho_{FB}(m_*) - \rho_*(m_*))\right](\theta_*-m_*) \\
&+ \rho_{FB}(m_*)\rho_*(m_*)-\frac{\rho_*^2(m_*)}{2}-\frac{\rho_{FB}^2(\theta_*)}{2}\bigg)\\
=&f(\theta_*) \left(a_R(a_R-2a_{FB})(\theta_{\mathrm{c}}-m_*)\left(\theta_*-\frac{\theta_{\mathrm{c}}+m_*}{2}\right)-a_{FB}^2\frac{(\theta_*-\theta_{\mathrm{c}})^2}{2}\right).
\end{align*}
Hence
\begin{align*}
\frac{d}{d\alpha}\left(\frac{1}{a_{FB}^2}\frac{dv_*}{d\theta_*}\right)=f(\theta_*)(\alpha-1)(\theta_{\mathrm{c}}-m_*)(2\theta_*-{\theta_{\mathrm{c}}-m_*})>0.
\end{align*}
So $\theta_*$ is non-decreasing in $\alpha$, and $d\theta_*/d\alpha>0$ if $\theta_*<1$, as follows from Theorem 1 of \cite{edlin1998strict}. Finally, if $\alpha \rightarrow 2$, then $\left.dv_*/d\theta_*\right|_{\theta_*=\theta_{\mathrm{c}}}\rightarrow 0$, implying that $\theta_* \rightarrow \theta_{\mathrm{c}}$.
\end{proof}

\section{Punishment}\label{a:pun}

\begin{proof}[Proof of Proposition~\ref{prop:spp}]
By Proposition~\ref{pr:opt}, $\overline v = \mathbb E [u_*(\rho_*(\mu_*(\theta)),\theta)]$.

Because the receiver can guarantee the payoff $u_R(\rho_R([0,1]),[0,1])$ by rejecting all transfers and choosing $\rho_R([0,1])$ in all periods, we have
\[\underline v_R\geq u_R(\rho_R([0,1]),[0,1]).\]

A sender's worst equilibrium with zero first-period ex-ante transfers ($\tau=0$) clearly exists. Let $\mu(\theta)$, $t(m)$, $\rho(m)$, $T(m)$, $v_S(m)$, $d^P$, and $m^P$ be used in the first period of such an equilibrium.
Define $V_S(\theta)$ as the expected payoff of the sender  if the first-period state is~$\theta$,
\begin{gather*}
	V_S(\theta)=(1-\delta)u_{S}(\rho(\mu(\theta )),\theta )-p(\theta),\\
\text{where }p(\theta)=(1-\delta)[t(\mu(\theta))+T(\mu(\theta) )]-\delta v_S(\mu(\theta)).
\end{gather*}
Condition C2 (a) requires that
\begin{equation}\label{ICap}
V_S(\theta) \geq  (1-\delta)u_{S}(\rho(\mu(\hat{\theta})),\theta )-p(\hat{\theta})\text{ for all $\theta,\hat{\theta}\in [0,1]$}. %
\end{equation}
As explained in the proof of Proposition~\ref{pr:impl}, this inequality holds if and only if 
\begin{equation}\label{non-decreasing}
\rho(\mu(\theta))\text{ is non-decreasing in }\theta,
\end{equation}
\begin{equation}\label{env}
V_S(\theta)=V_S(0)+(1-\delta)\int_0^\theta \frac{\partial u_S}{\partial \theta} (\rho(\mu(\tilde\theta)),\tilde\theta) d\tilde\theta\text{ for all }\theta\in [0,1]. 	
\end{equation}
Condition C2 (b), the constraint of Condition C4 (a) and the sender's constraint of Condition C5, and the constraint of Condition C4 (b), respectively imply that
\begin{gather}
V_S(\theta)\geq (1-\delta) u_S(d^P,\theta)+\delta\underline{v}_S\text{ for all }\theta\in [0,1],\label{lowerbound}\\
w(\rho(m),m)\leq L(\overline{v})\text{ for all } m\in\mu([0,1]),\label{selfenf}\\
w(d^P,m^P)\leq L(\overline{v}).\label{punishment}
\end{gather}
Thus, $\underline v_S$ is greater or equal than the value of the following problem
\begin{equation}\label{eq:relpr}
\begin{gathered}
\min_{\mu,\rho,V_S,m^P,d^P} \mathbb E \left[V_S(\theta)\right]\\
\text{subject to (\ref{non-decreasing}) -- (\ref{punishment})}.
\end{gathered}
\end{equation}

\begin{claim}\label{cl:relax}
There exists an optimal solution to the problem (\ref{eq:relpr}) such that $m^P\in \mu([0,1])$, $d^P=\rho (m^P)$, and (\ref{lowerbound}) holds with equality for $\theta \in m^P$.
\end{claim}
\begin{proof}
Given $\rho$ and $\mu$ that satisfy (\ref{non-decreasing}) and (\ref{selfenf}), define the function
\begin{equation}\label{hma}
h(m)=u_S(\rho(m),\theta(m))-\int_0^{\theta(m)} \frac{\partial u_S}{\partial \theta}(\rho(\mu(\tilde\theta)),\tilde\theta) d\tilde\theta 
\end{equation}
where $\theta(m)\in m$. Define
\begin{equation*}
m_\star \in \underset{m\in \mu([0,1])}{\arg\min} h(m)\text{ and }\theta_\star\in  m_\star.
\end{equation*} 
Hereafter, we assume that the infimum of $h$ is attained. If the infimum is not attained by any $m_\star$, a limiting argument,  as in the proof of Proposition~\ref{pr:impl}, needs to be made. It is easy to see that $\mu$, $\rho$, $m^P=m_\star$, $\theta^P=\theta_\star$, $d^P=\rho(m^P)$, and

\begin{equation}\label{VSb}
V_S(\theta)=(1-\delta)\left(u_S(\rho(m^P),\theta^P)+\int_{\theta^P}^\theta \frac{\partial u_S}{\partial \theta}(\rho(\mu(\tilde\theta)),\tilde\theta) d\tilde\theta\right)+\delta\underline{v}_S
\end{equation}
constitute a feasible solution to the problem (\ref{eq:relpr}). In particular, (\ref{VSb}) clearly satisfies (\ref{env}), and (\ref{lowerbound}) holds because
\begin{align*}
V_S(\theta) &= (1-\delta)\left(u_S(\rho(\mu(\theta)),\theta)-(h(\mu(\theta))-h(m^P))\right) + \delta \underline v_S\\
&\geq (1-\delta) u_S(\rho(m^P),\theta) +\delta \underline{v}_S,
\end{align*}
where the equality follows from (\ref{hma}) and (\ref{VSb}), and the inequality follows from (\ref{ICap}) evaluated at $\hat \theta=\theta^P$, where (\ref{ICap}) holds because (\ref{non-decreasing}) and (\ref{env}) hold.

Suppose for contradiction that there does not exist an optimal solution to (\ref{eq:relpr}) with the stated properties. Thus, in an optimal solution, $d^P\notin \rho(\mu([0,1]))$ and
\begin{equation}\label{intpun}
u_S(\rho(\mu(\theta)),\theta)-(h(\mu(\theta))-h(m_\star)) > u_S(d^P,\theta)\text{ for all }\theta\in [0,1].
\end{equation}
There are two cases to consider: $d^P\in (\rho(\mu(0)),\rho(\mu(1)) \setminus \rho(\mu([0,1]))$ and $d^P<\rho(\mu(0))$ (the case  $d^P>\rho(\mu(1))$ is analogous).

Suppose that $d^P\in (\rho(\mu(0)),\rho(\mu(1))) \setminus \rho(\mu([0,1]))$. Then there exists $\hat\theta\in (0,1)$ such that $d^P\in (\rho(\mu(\hat\theta-)),\rho(\mu(\hat\theta+))$. By continuity of $u_S$ and $V_S$, we have
\begin{equation*}
u_S(\rho(\mu(\hat\theta-)),\hat\theta)-(h(\mu(\hat\theta-))-h(m_\star))
	=u_S(\rho(\mu(\hat\theta+)),\hat\theta)-(h(\mu(\hat\theta+))-h(m_\star)).
\end{equation*}
Since $u_S(d,\theta)$ is concave in $d$ by Assumption~\ref{CS} and $h(m)$ is minimized at $m_\star$, this equality is incompatible with (\ref{intpun}) evaluated at $\hat\theta-$, leading to a contradiction.

Suppose that $d^P<\rho(\mu(0))$. The optimal $V_S$ is such that (\ref{lowerbound}) holds with equality for some~$\theta$,
\begin{equation*}
\min_{\theta\in [0,1]}(V_S(\theta)-(1-\delta)u_S(d^P,\theta))=\delta\underline v_{S},
\end{equation*}
which can be rewritten using (\ref{env}) as
\begin{equation*}
	(1-\delta)\min _{\theta\in [0,1] }\int_0^\theta \left(\frac{\partial u_S}{\partial \theta} (\rho(\mu(\tilde\theta)),\tilde\theta)-\frac{\partial u_S}{\partial \theta} (d^P,\tilde\theta) \right)d\tilde\theta=(1-\delta)u_S(d^P,0)+\delta\underline v _S-V_S(0).
\end{equation*}
Since $\partial^2 u_S(d,\theta)/\partial d \partial \theta>0$ and $\rho(\mu(\theta))>d^P$, the minimum is achieved at $\theta=0$. Moreover, (\ref{intpun}) implies that $u_S(d^P,0)<u_S(\rho(\mu(0)),0)$. Therefore, $u_S(\rho_-(0),0)\leq  u_S(d^P,0)$ because $\rho_-(0)\leq d^P$ by (\ref{punishment}), $d^P <\rho(\mu(0)) $   by supposition, and $u_S$ is concave in $d$. So an optimal $d^P<\rho(\mu(0))$ must be given by $\rho_-(0)$ to minimize $V_S(0)$, and thus $V_S$. But then we can modify $\mu$ and $\rho$ only in that $\mu$ separates $\theta=0$ and $\rho(\mu(0))$ is replaced with $\rho_-(0)$. Under this modification, we can support the same $V_S$ given by (\ref{env}) with $V_S(0)=(1-\delta)u_S(\rho_-(0),0)+\delta \underline v_S$, leading to a contradiction.
\end{proof}
Claim~\ref{cl:relax}, together with (\ref{VSb}), implies that $\underline v_S$ is greater or equal than
\begin{equation}\label{extrarelaxed}
\begin{gathered}
\min_{ \mu,\rho,\theta^P} \left\{ u_S(\rho(\mu(\theta^P)),\theta^P) + \mathbb E \left[ \int_{\theta^P}^\theta\frac{\partial u_S}{\partial \theta}({\rho}(\mu(\tilde\theta)),\tilde\theta)d\tilde \theta \right]\right\}\\
\text{subject to }\rho(\mu(\theta))\text{ is non-decreasing in }\theta,\\
w(\rho(m),m)\leq L(\overline{v})\text{ for all } m\in\mu([0,1]).
\end{gathered}
\end{equation}

\begin{claim}
There exists an optimal solution to the problem (\ref{extrarelaxed}) that solves the problem (\ref{eq:worst}).
\end{claim}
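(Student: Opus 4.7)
The plan is to show that any feasible $(\mu,\rho,\theta^p)$ for (\ref{extrarelaxed}) admits a replacement $(\mu^\star,\rho^\star,\theta^p)$ that satisfies the extra constraint in (\ref{eq:worst}) without raising the objective; since (\ref{eq:worst}) is a restriction of (\ref{extrarelaxed}), this will imply the claim. I would first rewrite the objective of (\ref{extrarelaxed}) via Fubini as
\begin{equation*}
u_S(\rho(\mu(\theta^p)),\theta^p)+\int_{\theta^p}^1(1-F(\tilde\theta))\tfrac{\partial u_S}{\partial\theta}(\rho(\mu(\tilde\theta)),\tilde\theta)\,d\tilde\theta-\int_0^{\theta^p}F(\tilde\theta)\tfrac{\partial u_S}{\partial\theta}(\rho(\mu(\tilde\theta)),\tilde\theta)\,d\tilde\theta.
\end{equation*}
By Assumption~\ref{CS}(4), $\partial u_S/\partial\theta$ is strictly increasing in $d$; combined with the opposite-signed coefficients $-F<0$ on $[0,\theta^p)$ and $1-F>0$ on $(\theta^p,1]$, pointwise minimization over the feasibility interval $d\in[\rho_-(m),\rho_+(m)]$ (implied by $w_R(\rho(m),m)\le L(\overline v)$) drives $\rho(\mu(\tilde\theta))$ to $\rho_+(\mu(\tilde\theta))$ on $[0,\theta^p)$ and to $\rho_-(\mu(\tilde\theta))$ on $(\theta^p,1]$, while concavity of $u_S$ in $d$ sends the first term's minimizer to one of the endpoints $\rho_\pm(m^p)$.

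Second, by a revelation-principle argument mirroring the one used in the proof of Proposition~\ref{PrOpt}, I would restrict without loss to monotone $\mu$; the enforcement constraint is preserved under this transformation because pooling weakly shrinks $w_R$ by Jensen combined with concavity of $u_R$ in $d$. Given such $\mu$ with $m^p=\mu(\theta^p)$, I would define the candidate $\rho^\star$ by the formula in (\ref{eq:worst}). Since $\rho_-(m)$ and $\rho_+(m)$ are separately nondecreasing in $m$ (proof of Proposition~\ref{PrOpt}), monotonicity of $\rho^\star\circ\mu$ then reduces to the single cross-boundary condition $\rho_+(m^-)\le\rho^\star(m^p)\le\rho_-(m^+)$ for messages $m^-,m^+$ immediately below and above $m^p$.

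If this cross-boundary condition fails, I would enlarge $m^p$---pooling additional states adjacent to $m^p$ into the same message---to restore monotonicity. Complete pooling $m^p=[0,1]$ trivially works, so by continuity and monotonicity of $\rho_\pm$ in the pool endpoints there is a minimal pool containing $\theta^p$ for which either $\rho^\star(m^p)=\rho_+(m^p)\le\rho_-(m^+)$ or $\rho^\star(m^p)=\rho_-(m^p)\ge\rho_+(m^-)$ holds. I expect the main obstacle to be showing that the objective weakly drops under this construction: enlarging the pool forces an intermediate (non-extreme) decision on states inside $m^p$, which can raise the integral contributions, and this must be offset by the pointwise extreme selection on $[0,1]\setminus m^p$. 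A careful comparison using the coefficient signs $\pm F$, together with the observation that the enlargement only occurs where extreme pointwise selection was incompatible with monotonicity to begin with, should close the argument.
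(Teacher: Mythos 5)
Your setup is sound and tracks the paper's own argument closely: the Fubini rewrite of the objective with weights $-F$ on $[0,\theta^p)$ and $1-F$ on $(\theta^p,1]$ is correct, the sorting condition $\partial^2 u_S/\partial d\,\partial\theta>0$ does drive the decision to $\rho_+$ below $m^p$ and to $\rho_-$ above it, and the relevant function of $\rho(m^p)$ is indeed concave (though note it is the \emph{entire} $\rho(m^p)$-dependence, which collapses to $\mathbb{E}\bigl[u_S\bigl(\rho(m^p),\max\{\underline\theta^p,\min\{\theta,\overline\theta^p\}\}\bigr)\bigr]$, not the first term alone, whose minimizer you must push to an endpoint). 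You have also correctly located the real difficulty: once monotonicity across the boundary of $m^p$ is imposed, the feasible set for $\rho(m^p)$ is no longer $[\rho_-(m^p),\rho_+(m^p)]$, so the concave minimization may land at $\rho(\mu(\underline\theta^p-))$ or $\rho(\mu(\overline\theta^p+))$ rather than at $\rho_\pm(m^p)$.

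That, however, is exactly where your argument stops. ``Enlarge $m^p$ until monotonicity is restored, and a careful comparison should show the objective weakly drops'' is the statement to be proved, not a proof. Enlarging the pool replaces extreme decisions on the absorbed states by a single interior value; by your own sign analysis this pushes the two integrals in opposite directions depending on which side of $\theta^p$ the absorbed states lie and which endpoint $\rho_\pm(m^p_{\mathrm{new}})$ is selected, and as $m^p$ grows both $\rho_\pm(m^p)$ and the adjacent messages move, so neither the existence of a ``minimal'' admissible pool nor the sign of the net change follows from continuity alone. The paper closes this step with a different device: it supposes for contradiction that no optimal solution has $\rho(m^p)\in\{\rho_-(m^p),\rho_+(m^p)\}$, selects an optimal solution whose $m^p$ is maximal in the set order, shows that the states adjacent to $m^p$ must themselves already be pooled (otherwise merging them into $m^p$ while keeping the decision $\rho(m^p)$ strictly lowers the objective), and then uses concavity in $\rho(m^p)$ to move $\rho(m^p)$ to one of the four values $\rho(\mu(\overline\theta^p+))$, $\rho(\mu(\underline\theta^p-))$, $\rho_+(m^p)$, $\rho_-(m^p)$, each of which contradicts either optimality or the maximality of $m^p$. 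Some such extremal/contradiction argument is needed to finish; as written, your proposal leaves the decisive inequality unestablished.
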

\begin{proof}
Consider an optimal solution $(\mu,\rho,\theta^P)$ to (\ref{extrarelaxed}).
Without loss of generality, $$m^P=\mu(\theta^P)=\{\theta:{\rho}(\mu(\theta))={\rho}(\mu(\theta^P))\},$$ otherwise we can modify the message and decision rules such that all states in $\{\theta:{\rho}(\mu(\theta))={\rho}(\mu(\theta^P))\}$ are pooled, the same decision $\rho(\mu(\theta))$ is induced  for all $\theta$, the constraints of (\ref{extrarelaxed}) hold, and the value of (\ref{extrarelaxed}) remains the same.
Moreover, 
\begin{equation*}\label{51}
{\rho}(\mu(\theta))=
\begin{cases}
\rho_-(\mu(\theta)), &\text{ if } \mu(\theta) > m^P,\\
\rho_+(\mu(\theta)), &\text{ if }\mu(\theta)<m^P,
\end{cases}
\end{equation*}
otherwise we can decrease the value of (\ref{extrarelaxed}) without violating the constraints either by decreasing $\rho(\mu(\theta))$ for $\mu(\theta)>m^P$ or by increasing $\rho(\mu(\theta))$ for $\mu(\theta)<m^P$.

Suppose for contradiction that there does not exist an optimal solution to (\ref{extrarelaxed}) with $\rho (m^P)\in \{\rho_-(m^P),\rho_+(m^P)\}$. Consider an optimal solution such that no other optimal solution has a strictly larger $m^P$ in the set order. 
If $\overline\theta^P=\sup m^P<1$, then some states adjacent to $m^P$ from above, say $(\overline\theta^P,\overline\theta^P+\varepsilon)$, must be pooled, otherwise we can decrease the value of (\ref{extrarelaxed}) by pooling states $(\overline\theta^P,\overline\theta^P+\varepsilon)$ and $m^P$ and inducing the same decision $\rho (m^P)$. Similarly, if $\underline \theta^P =\inf m^P>0$, then some states adjacent to $m^P$ from below, say $(\underline\theta^P-\varepsilon,\underline\theta^P)$, must be pooled. Notice that the objective function in (\ref{extrarelaxed}) is concave in $\rho(m^P)$; so we can decrease the value of (\ref{extrarelaxed}) without violating the constraints by changing $\rho(m^P)$ to at least one of the four values $\rho(\mu(\overline\theta^P +))$, $\rho(\mu(\underline \theta^P -))$, $\rho_+(m^P)$, $\rho_-(m^P)$, leading to a contradiction.
\end{proof}

It remains to show that a single-period punishment strategy profile from Proposition~\ref{prop:spp} can be supported in an equilibrium using the ex-ante transfers $\underline{\tau} _0$, $\underline{\tau} _S$, $\underline{\tau} _R$ given by
\begin{gather*}
\underline\tau_0=\underline\tau_S=\mathbb{E}[u_{S}(\rho_*(\mu_*(\theta)),\theta)-\underline{t}_0(\mu_*(\theta))]-\underline v_S,\\
(1-\delta)[\underline \tau_R  +\mathbb{E}[u_{R}(\rho_*(\mu_*(\theta)),\theta)+\underline{ t}_0(\mu_*(\theta))]] +\delta (\overline v - \underline v_S) =\underline v_R .
\end{gather*}
Condition C6 holds because the continuation payoffs are $v_S(m)=\underline{v}_S$ and $v_R(m)=\overline v-\underline{v}_S$. Condition C5 holds because Condition C6 holds and $T(m)=0$. The sender's (receiver's) constraint of Condition C1 holds with equality for $\underline \tau _0=\underline \tau _S$ (for $\underline \tau _R$). The receiver's (sender's) constraint of Condition C1 holds for $\underline \tau _0=\underline \tau _S$ (for $\underline \tau _R$) because it can be simplified to $\overline v\geq \underline{v}_S +\underline{v}_R$. 
Condition C2 (a) holds because $\underline\rho_j(\underline\mu_j(\theta))$  is non-decreasing in $\theta$ and $\underline{t}_j$ satisfies (\ref{eq:h}). 
Condition C2 (b) holds because by deviating to a message-transfer pair $(\hat{m},\hat{t})$ that is not observed on the equilibrium path, the sender induces $\underline d_j^P=\underline\rho_j(\underline m_j^P)$, which he can induce more cheaply on the equilibrium path with message $\underline m_j^P$ and zero interim transfer $\underline{ t}_{j}( \underline m_j^P) =0$, as required by (\ref{eq:ts}). Condition C4 (a) holds because $w(\underline \rho _j (m),m)\leq L (\overline v)$ for all $m\in \underline \mu _j ([0,1])$. Condition C4 (b) holds  because Condition C4 (a) holds and $\underline d_j ^P = \underline \rho_j (\underline m_j^P)$. Condition C3 holds because Condition C4 holds and $\underline{ t}_j(m)$ is non-positive.
\end{proof}

\begin{proposition}\label{prop:worst}
Suppose Assumption \ref{CS} holds and the state is uniformly distributed. Suppose also that $\rho_S(1/2) > \rho_R(1/2)$. Denote $\rho_S(\theta)=a_S \theta + b_S$ and $\rho_R(\theta)=a_R \theta + b_R$. Then there exists an optimal equilibrium in single-period punishment strategies where the sender's penal decision rule is $\underline\rho_S(m)=\rho_R(m)-\ell$ for all $m$ and the sender's penal message rule $\underline\mu_S$ pools the states below $\underline{\theta}_S$ and separates the states above $\underline{\theta}_S$ where
\begin{equation*}\label{eqn:worstt}
\begin{gathered}
\underline{\theta}_S =
\begin{cases}
0, &\text{ if }\ell > \frac{3 a_R^2}{32 a_S}-b_S+b_R,\\
\frac{a_R+\sqrt{a_R^2-8a_S (b_S-b_R+\ell)}}{2a_S}, &\text{ if } \ell < \frac{3 a_R^2}{32 a_S}-b_S+b_R.
\end{cases}
\end{gathered}
\end{equation*}
\end{proposition}

\begin{proof} Define $\overline\theta_\star=\sup m^P$ and $\underline\theta_\star=\inf m^P$, where $m^P$ solves (\ref{eq:worst}).
We first show by contradiction in each case that if $\overline\theta_\star<1$, then $\underline{\rho}_S(m^P)=\rho_R(m^P)-\ell$ and all states $\theta>\overline\theta_\star$ are separated by $\underline{\mu}_S$. Suppose that $\underline{\rho}_S(m^P)\in(\rho_R(m^P)-\ell,\rho_R(\overline\theta_\star)-\ell]$. If some states above $\overline\theta_\star$ are pooled, say $(\theta_1,\theta_2)$, we can decrease the value of (\ref{eq:worst}) by separating these states, as follows from
\begin{equation*}
\int_{\theta_1}^{\theta}\left(\rho_R\left(\theta_1+\theta_2\right)/2-\ell\right)d\tilde \theta>  \int_{\theta_1}^{\theta}\left(\rho_R \left(\tilde \theta \right)-\ell\right)d\tilde \theta\text{ for }\theta\in (\theta_1,\theta_2).
\end{equation*}
If all states above $\overline\theta_\star$ are separated, we can decrease the value of (\ref{eq:worst}) by pooling states $[\overline\theta_\star,\overline\theta_\star+\varepsilon)$  together with the states in $m^P$, and inducing the same decision $\rho_R(m^P)$ for all these states, leading to a contradiction.
 Next suppose that $\underline{\rho}_S(m^P)>\rho_R(\overline\theta_\star)-\ell$. Then some states adjacent to $\overline\theta_\star$ from above, say $(\overline\theta_\star,\hat \theta)$,  must be pooled, such that $\underline{\rho}_S(m^P)<\rho_R\left(\left(\overline\theta_\star+\hat \theta\right)/2\right) -\ell$. But then we can decrease the value of (\ref{eq:worst}) by separating states $(\hat \theta-\varepsilon, \hat \theta)$, as follows from
\begin{small}\begin{gather*}
 \int_{\overline\theta_\star}^{\theta}\left(\rho_R \left(\overline\theta_\star+\hat \theta \right)/2-\ell\right)d\tilde \theta>  \int_{\overline\theta_\star}^{\theta}\left(\rho_R\left(\overline\theta_\star+\hat \theta -\varepsilon\right)/2-\ell\right)d\tilde \theta\text{ for }\theta\in (\overline\theta_\star,\hat \theta -\varepsilon),\\
\int_{\overline\theta_\star}^{\theta}\left(\rho_R\left(\overline\theta_\star+\hat \theta\right)/2-\ell\right)d\tilde \theta>  \int_{\overline\theta_\star}^{\hat \theta -\varepsilon}\left(\rho_R\left(\overline\theta_\star+\hat \theta -\varepsilon\right)/2-\ell\right)d\tilde \theta +  \int_{\hat \theta -\varepsilon}^{\theta}\left(\rho_R(\tilde \theta)-\ell\right)d\tilde \theta\text{ for }\theta\in (\hat \theta -\varepsilon,\hat \theta).
\end{gather*}\end{small}

Analogously, we can show that if $\underline\theta_\star>0$, then $\underline{\rho}_S(m^P)=\rho_R(m^P)+\ell$ and all states $\theta<\underline\theta_\star$ are separated by $\underline{\mu}_S$. This implies that either $\underline\theta_\star=0$ or $\overline\theta_\star=1$.

Thus, the sender's worst equilibrium payoff $\underline{v}_S$ is achieved either by a message rule that pools the states below $\underline{\theta}^L$ (and separates the rest) and decision rule $\underline\rho(m)=\rho_R(m)-\ell$, or by a message rule that pools the states above $\underline \theta^H$ (and separates the rest) and decision rule $\underline\rho (m)=\rho_R(m)+\ell$. Computation reveals that the value of (\ref{eq:worst}) unde pooling interval $[0,\underline \theta)$ and decision rule $\underline \rho (m) =\rho_R(m)-\ell$  is smaller than the value of (\ref{eq:worst}) under pooling interval $(1-\underline \theta ,1]$ and decision rule $\underline \rho (m) =\rho_R(m)+\ell$ for all $\underline \theta \in [0,1]$ if $a_S/2+b_S>a_R/2+b_R$. Moreover, the value of (\ref{eq:worst}) is minimized for $\underline{\theta}^L \in [0,1]$  at either $\underline{\theta}^L=0$ or 
\begin{equation*}
\underline{\theta}^L = \frac{a_R+\sqrt{a_R^2-8a_S (b_S-b_R+\ell)}}{2a_S}<1,
\end{equation*}
where the inequality follows from the assumption $a_S/2+b_S>a_R/2+b_R$. Further computation then produces $\underline \theta_S$, as defined in Proposition~\ref{prop:worst}.
\end{proof}

\section{Public Information}\label{appendix_t}
\begin{proof}[Proof of Proposition \ref{prop:L_info}] 
Suppose, for the sake of argument, that $L(\overline{v})$, as defined in Section \ref{sec:impl}, takes the same value under $\psi$ and $\hat{\psi}$. We will show that the best equilibrium joint payoff is higher and the worst monotone equilibrium payoffs are smaller under $\hat{\psi}$ than under~$\psi$. Specifically,  $\hat{\overline{v}}\geq\overline{v}$, $\hat{\underline{v}}_R\leq\underline{v}_R$, and $\hat{\underline{v}}_S\leq\underline{v}_S$.  This implies that $\hat{L}(\hat{\overline{v}})\geq L(\overline{v})$. The proposition follows easily from this observation.

The best equilibrium joint payoff $\overline{v}$ under $\psi$ can be supported by an equilibrium in single-period punishment strategies such that $\rho_*(\mu_*(\theta))$ is induced in each period on the equilibrium path, by application of Proposition~\ref{prop:spp} to each realization of signal $\psi$. Since $\rho_*(\mu_*(\theta))$ is non-decreasing in $\theta$ on $[0,1]$, it can be supported in an equilibrium under less informative signal $\hat{\psi}$ by application of an analogue of Proposition~\ref{pr:impl} to each realization of signal $\psi$; so, $\hat{\overline{v}}\geq \overline{v}$.

By Proposition~\ref{prop:spp}, the receiver's worst equilibrium payoffs under $\hat\psi$ and $\psi$ are 
\begin{equation*}
	\hat{\underline{v}}_R=\mathbb{E}[u_R(\rho_R(\hat{\psi}(\theta)),\theta)]\leq  \mathbb{E}[u_R(\rho_R(\psi(\theta)),\theta)]=\underline{v}_R,
\end{equation*}
where the inequality holds because $\psi$ is more informative than $\hat{\psi}$.

By a similar argument to the proof of Proposition~\ref{prop:spp}, the sender's worst equilibrium payoff under $\psi$ can be supported by $\tau=0$, $T\left(m\right)=0$, and $v_S\left(m\right)=\underline{v}_S$; that is, the sender may refuse to make any ex-ante or ex-post transfers, and the worst punishment for him would involve zero transfers from the receiver and the worst continuation payoff. Let $\mu \left(\theta\right)$ and $\rho\left(m\right)$ be penal message and decision rules that support this equilibrium. By assumption $\rho(\mu(\theta))$ is non-decreasing in $\theta$. 
Then the interim transfer $t(\mu(\theta))$ is defined by (\ref{eq:h}) and (\ref{eq:ts}) given that the set of states is $\psi(\theta)\subset [0,1]$ rather than $[0,1]$:
\begin{equation}\label{tphi}
\begin{gathered}
t(m)=h(m)-\min_{m\in \mu(\psi(\theta))} h(m),\\
h(m)=u_S(\rho(m),\theta(m))-\int_0^{\theta(m)} \frac{\partial u_S}{\partial \theta} (\rho(\mu(\tilde\theta)),\tilde\theta) d\tilde\theta, 
\end{gathered}
\end{equation}
where $\theta(m)\in m$.
The message and decision rules $\mu(\theta)$ and $\rho(m)$ such that $\rho(\mu(\theta))$ is non-decreasing in $\theta$ can be supported in equilibrium under $\hat{\psi}$ using the interim transfer rule $\hat{t}(m)$ that differs from $t(m)$ given by (\ref{tphi}) only in that the minimum of $h$ is taken over $m\in \mu(\hat{\psi}(\theta))$ rather than over $m\in\mu(\psi(\theta))$. 
Since  $\psi(\theta)\subset \hat{\psi}(\theta)$ for all $\theta\in [0,1]$ by the definition of more informative signals, we have $\hat{t}(\mu(\theta))\geq t(\mu(\theta))$ for all $\theta\in [0,1]$, and thus
\begin{equation*}
\hat{\underline{v}}_S\leq\mathbb{E}[u_{S}(\rho(\mu(\theta )),\theta )-\hat{t}(\mu(\theta))]\leq \mathbb{E}[u_{S}(\rho(\mu(\theta )),\theta )-{t}(\mu(\theta))] =\underline{v}_S. \qedhere
\end{equation*}
\end{proof}

\end{appendices}
\bibliographystyle{econ}
\bibliography{relational}	
\end{document}